\newcommand\R{\mathrm{R}}
\def\tran{^{\mbox{\scriptsize T}}}
\newcommand\zbar{\underline{z}}
\newcommand\bbar{\underline{\tilde{b}}}
\DeclareMathOperator*{\argmax}{arg\,max}
\DeclareMathOperator*{\argmin}{arg\,min}
\newtheorem{theorem}{Theorem}
\newtheorem{lemm}{Lemma}
\newtheorem{Definition}{Definition}
\newtheorem{Pro}{Proposition}
\def\blue{\textcolor{blue}}
\def\brown{\textcolor{brown}}
\def\gray{\textcolor{gray}}
\begin{document}
	\sloppy

	\title{Status Updating under Partial Battery Knowledge in Energy Harvesting IoT Networks}
	
	\author{\IEEEauthorblockN{Mohammad Hatami\IEEEauthorrefmark{1}, Markus Leinonen\IEEEauthorrefmark{1}, and Marian Codreanu\IEEEauthorrefmark{2}}
		\thanks{
			\IEEEauthorrefmark{1}
			Centre for Wireless Communications -- Radio Technologies, University of Oulu, Finland. e-mail: firstname.lastname@oulu.fi
			
			\IEEEauthorrefmark{2}
			Department of Science and Technology, Link\"{o}ping University, Sweden. e-mail: marian.codreanu@liu.se.

			This research has been financially supported by the Infotech Oulu, the Academy of Finland (grant 323698), and Research Council of Finland (former Academy of Finland) 6G Flagship Programme (Grant Number: 346208). The work of M. Leinonen has also been financially supported in part by the Academy of Finland (grant 340171).
			
		}
	}
	

	\maketitle

	\begin{abstract}
		We study status updating under inexact knowledge about the battery levels of the energy harvesting  sensors in an IoT network, where users make 
		on-demand 
		requests to a cache-enabled edge node to send updates about various random processes monitored by the sensors. 
		To serve the request(s), the edge node either commands the corresponding sensor to send an update or uses the aged data from the cache. 
		We find a control policy that minimizes the average on-demand AoI subject to per-slot energy harvesting constraints under partial battery knowledge at the edge node.
		Namely, the edge node is informed about sensors' battery levels only via received status updates, leading to uncertainty about the battery levels for the decision-making. 
		We model the problem as a POMDP which is then reformulated as an equivalent belief-MDP. The belief-MDP in its original form is difficult to solve due to the infinite belief space.
		However, by exploiting a specific pattern in the evolution of beliefs, we truncate the belief space and develop a  dynamic programming algorithm to obtain an optimal policy.
		Moreover, we address a multi-sensor setup under a transmission limitation for which we develop an asymptotically optimal algorithm.
		Simulation results assess the performance of the proposed methods.
	\end{abstract}
	\begin{IEEEkeywords}
		Age of information (AoI), energy harvesting (EH), partially observable Markov decision process (POMDP).
	\end{IEEEkeywords}
	\section{Introduction}

	In future Internet of things (IoT) systems,
	timely delivery of status updates about a remotely monitored random process to a destination is the key enabler for the emerging time-critical applications, e.g., {drone control, smart home, and transport systems.} 
	Such destination-centric information freshness can be quantified by the \textit{age of information} (AoI) \cite{AoI_Orginal_12,sun2019age}. 
	On the other hand,  IoT networks with low-power sensors are subject to stringent energy limitations, which is often counteracted by \textit{energy harvesting} (EH) technology. 
	To summarize, these emerging applications require designing \textit{AoI-aware status updating control} that both guarantees timely status delivery and accounts for the limited energy resources of EH sensors.

	In this paper, we consider a status update IoT network consisting of EH sensors, users, and an edge node, which acts as a gateway between the sensors and users, as depicted in Fig.~\ref{fig_systemmodel_multisensor}. The users are interested in time-sensitive information about several random processes, each measured by a sensor.
	The users send requests to the edge node that has a cache storage to store the most recently received status update from each sensor. To serve a user's request, the edge node either commands the corresponding sensor to send a fresh status update or uses the aged data from the cache. 
	This introduces an inherent trade-off between the 
	age of information
	(AoI) at the users and the energy consumption of the sensors.
	As the main novelty of our work compared to the related AoI-aware network designs \cite{Hatami-etal-20,hatami2020aoi,hatami2021spawc,hatami2022JointTcom}, we consider a practical scenario where the edge node is informed of the sensors' battery levels only via the received status updates, 
	leading to \textit{partial} battery knowledge at the edge node. Particularly, our objective is to find the best actions of the edge node to minimize the average AoI of the served measurements, i.e., \textit{average on-demand AoI}. Accounting for the {partial} battery knowledge, we model this as an average-cost partially observable Markov decision process (POMDP). We then convert the POMDP into a belief-state MDP and, {via characterizing its key structures}, develop {an iterative algorithm to obtain} an optimal policy. 
	Further, we extend the proposed approach to the multi-sensor setup under a transmission constraint, where only a limited number of sensors can send status updates at each time slot.
	{Numerical experiments assess the performance of the proposed methods.}
	
	\subsection{Contributions}
	{The primary contributions of our study are summarized as follows:}
	\begin{itemize}
		\item We consider (on-demand) AoI-minimization for a status update IoT network where the decision-maker does not know the exact battery levels of the sensors at each  slot.  Accounting for the partial battery knowledge, we model the problem as an average-cost POMDP.
		\item We reformulate the POMDP into an equivalent belief-MDP which, however, is difficult to solve in its original form due to the infinite belief space. Fortunately, we exploit  a certain pattern in the evolution of beliefs to truncate the belief space and develop a dynamic programming algorithm  that obtains an optimal policy.
		{In addition, we derive an efficient algorithm implementation by exploiting the inherent sparsity of the transition matrices.}
		\item Further, we extend the proposed approach to the multi-sensor setup under a transmission constraint, where only a limited number of sensors can send status updates at each time slot. {In particular, we develop 
			a low-complexity relax-then-truncate algorithm and show 
			its asymptotic optimality  as the number of sensors approaches infinity.}
		\item
		Numerical experiments illustrate the threshold-based structure of an optimal policy and show the gains obtained by the proposed optimal POMDP-based policy compared to a request-aware greedy policy.
		Further, numerical experiments depict that the proposed relax-then-truncate method has near-optimal performance even for moderate numbers of sensors in multi-sensor scenarios under a transmission constraint.
	\end{itemize}

	{To the best of our knowledge, this is the first work that derives an optimal policy for
		AoI minimization in a network with EH sensors, where the decision-making relies only on partial battery knowledge about the sensors' battery levels. }
	\subsection{Related Works}

	AoI-aware scheduling has witnessed a great research interest the last few years.
	The works 
	\cite{Ceran2019HARQ,Ceran2021MultiUserCMDP,tang2020CMDPTRUNCATE,Kadota_Modiano2018Broadcast,Maatouk2021OptimalityWhittle,kriouile2021global,Hsu_modiano2020aoimultiuser_tcm,zakeri2023TWC,yao2022ageTMCJournal,gong2020AoI-Random-Arrival,shao2020partially,stamatakis2022semantics,liu2022PartialObservations,SaeidSadeghi2023MultiSourceAoI} consider a sufficient power source 
	whereby an update can be sent any time.
	Differently,
	\cite{wu2017optimal_oneunitenergy,Stamatakis2019control,ceran2021learningEH,tunc2019optimal,leng2019AoIcognitive,Elvina2021SourceDiversityEH,abd2019reinforcement,AbdElmagid2020AoIOptimalJoint,Hatami-etal-20,leng2022LearningTransmitEH,ArafaTimelyErasureEHMultipleSource}
	consider that the {source nodes} are powered by {energy harvested from the environment}; thus, AoI-aware scheduling is carried out under the energy causality constraint at the source nodes.
	Also, while the above works (implicitly) assume that time-sensitive information
	is needed at the destination at all time moments, \cite{hatami2020aoi,hatami2021spawc,Hatami2022AsymptoticallyOpt,hatami2022JointTcom,hatami2022spawc_partialbattery,chiariotti2021query,Li2021waiting} 
	study information freshness of the source(s) driven by {users' requests}. 
	{Particularly, in our prior research \cite{hatami2020aoi,Hatami-etal-20,hatami2021spawc,hatami2022JointTcom}, we introduced the concept of on-demand AoI. This metric quantifies the freshness of information seen by users in request-based status updating systems.
		In \cite{hatami2020aoi,Hatami-etal-20,hatami2021spawc,hatami2022JointTcom}, we have mainly focused on optimal scheduling under the assumption that the decision-maker (i.e., the edge node) possessed precise knowledge of the sensors' battery levels at every time slot. However, such an assumption necessitates continuous coordination between the sensors and the edge node, which may not always be feasible in practical scenarios.
		In contrast, this study delves into optimal scheduling under \textit{partial battery knowledge at the edge node}, a scenario that can be effectively modeled as a POMDP.}
	


	
	A few works have applied POMDP formulation, in which the state of the system is not fully observable to the decision maker, in AoI-aware design
	\cite{yao2022ageTMCJournal,gong2020AoI-Random-Arrival,shao2020partially,stamatakis2022semantics,leng2019AoIcognitive}. In
	\cite{yao2022ageTMCJournal}, the authors proposed POMDP-based AoI-optimal transmission scheduling in a status update system under an average energy constraint and uncertain channel state information.
	In \cite{gong2020AoI-Random-Arrival}, the authors proposed 
	an age-aware scheduling policy for a multi-user uplink system under partial knowledge of the status update arrivals at the monitor node.
	{In \cite{shao2020partially}, the authors investigated AoI-optimal scheduling in a wireless sensor network where the AoI values of the sensors are not directly observable by the access point.}
	In \cite{stamatakis2022semantics}, the authors derived an optimal sensor probing policy in an IoT network with intermittent faults and inexact knowledge about the status (healthy or faulty) of the system. 
	In \cite{liu2022PartialObservations}, the authors derived age-aware POMDP-based
	scheduling for a wireless multi-user uplink network with partial observations of the local ages at end devices. 
	In \cite{leng2019AoIcognitive}, the authors investigated AoI minimization for an EH cognitive secondary user with either perfect or imperfect spectrum sensing.
	Preliminary results of this paper appear in \cite{hatami2022spawc_partialbattery}.
	\subsection{Organization}
	The structure of the paper is outlined as follows. For the sake of clarity in presentation, we first restrict ourselves to the single-sensor scenario\footnote{This is equivalent to the case where multiple sensors have independent links to the edge node.
	}, and then, we address the multi-sensor scenario under the transmission constraint. In particular, Section~\ref{sec_systemmodel_single_sensor} describes the single-sensor system setup and the problem formulation.
	In Section~\ref{sec_single_sensor}, we propose a novel POMDP-based approach that finds an optimal policy for the single-sensor setup. In Section~\ref{sec_multisensor-limited_bw}, we address the multi-sensor scenario under a transmission constraint. In Section~\ref{sec_simulation}, we evaluate the performance of the proposed methods through simulations. Finally, in Section~\ref{sec_conclusions}, we conclude the paper.

	\section{Single-Sensor System Model and Problem Formulation}\label{sec_systemmodel_single_sensor}

	\begin{figure}[t!]
		\centering
		\includegraphics[width=.8\columnwidth]{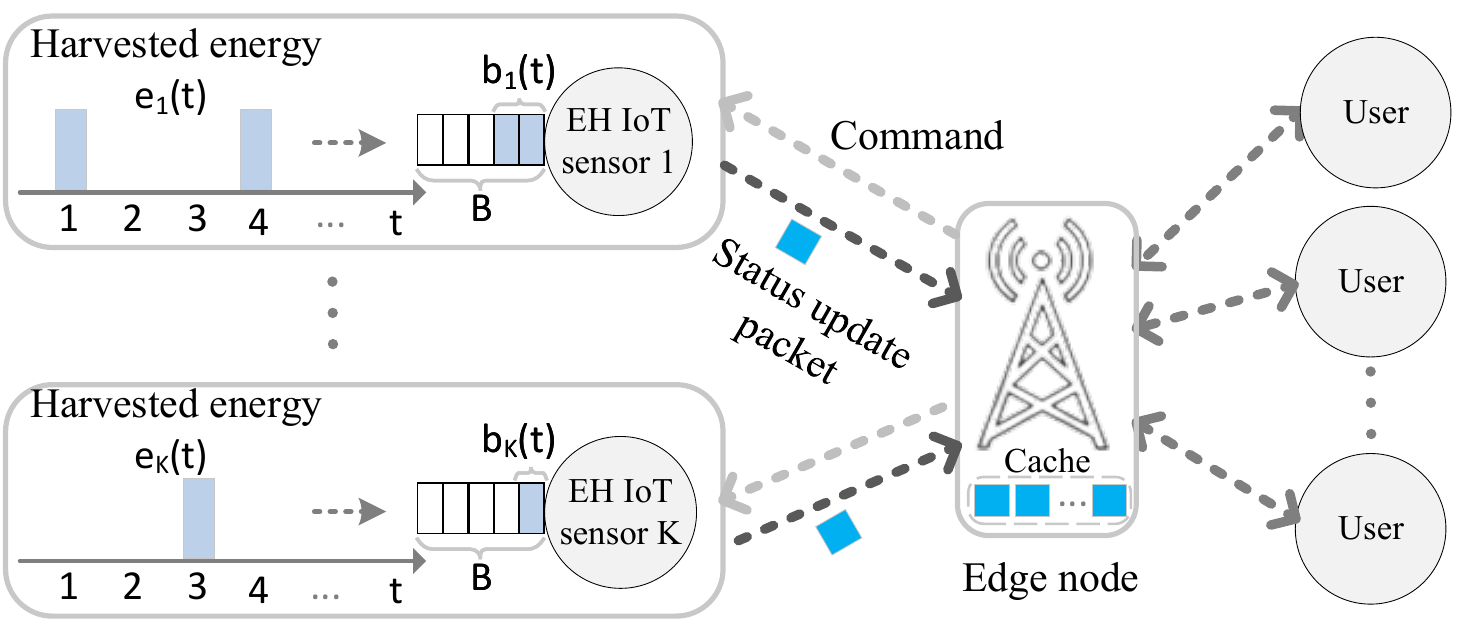}
		\caption{An IoT sensing network with $K$ EH sensors, an edge node, and users, which are interested in timely status update information of the physical processes monitored by the sensors.
		}
		\label{fig_systemmodel_multisensor}
	\end{figure}

	\subsection{Network Model}\label{sec_network}
	We consider a status update system, where an energy harvesting (EH) sensor {(e.g., sensor~1 in Fig.~\ref{fig_systemmodel_multisensor})} sends status updates about the monitored random process to users via a cache-enabled edge node, which acts as a gateway between the sensors and the users. 
	A time-slotted system with slots ${t \in \mathbb{N}}$ is considered. 
	We consider request-based status updating, where, at the beginning of slot $t$, users request for the status of the sensor (i.e., a new measurement) from the edge node. The edge node, which has a cache that stores the most recently received \textit{status update} from the sensor,  handles the arriving requests during the same slot $t$.
	Let ${r(t) \in \{0,1\}}$, ${t=1,2,\dots}$, denote the random process of requesting the status of the sensor at slot $t$; ${r(t) = 1}$ if the status is requested (by at least one user) and ${r(t)=0}$ otherwise. The requests are independent across time slots and the probability of having a request at each time slot is ${p=\mathrm{Pr}\{r(t) = 1\}}$. Upon receiving a request at slot $t$, the edge node serves the requesting user(s) by either 1) commanding the sensor to send a fresh status update packet\footnote{{In this paper, the terms ``\textit{status update packet}'', ``\textit{status update}'', and ``\textit{update}'' are used interchangeably.}} or 2) using the stored measurement from the cache. Let ${a(t) \in \mathcal{A} = \{0,1\}}$ be the \textit{command action of the edge node} at slot $t$; ${a(t)=1}$ if the edge node commands the sensor to send an update and ${a(t)=0}$ otherwise.
	
	
	\subsection{Energy Harvesting Sensor}\label{EH_model}
	{The sensor operates by harvesting energy from the environment and storing it into a battery of finite capacity $B$ (units of energy).}
	We model the energy arrivals ${e(t) \in \left\lbrace 0 ,1\right\rbrace }$, ${t = 1,2,\dots}$, as a Bernoulli process with rate ${\lambda=\Pr\{ e(t) = 1 \}}$, $\forall t$.
	{This characterizes the discrete nature of the energy arrivals in a slotted-time system, i.e., at each time slot, a sensor either harvests one unit of energy or not (see e.g., \cite{hatami2020aoi,hatami2022JointTcom,Stamatakis2019control,bernoulliEH_Baknina2018_sending})}.
	{We denote the battery level of the sensor at the beginning of slot $t$ by ${b(t) \in \{0,\ldots,B\}}$.
		We assume that measuring and transmitting a status update from the sensor to the edge node consumes one unit of energy (see, e.g., \cite{hatami2020aoi,hatami2022JointTcom,yao2022ageTMCJournal,Stamatakis2019control,wu2017optimal_oneunitenergy}).}
	Thus, if the sensor is commanded to send an update (i.e., $a(t)=1$), it can only do so if its battery is not empty (i.e., $b(t)\geq 1$).
	Let $d(t)\in\{0,1\}$ indicate the sensor's action at slot $t$; $d(t)=1$ if a status update is sent, and $d(t)=0$ otherwise.
	Thus, ${d(t) =  a(t) \mathds{1}_{\{b(t) \geq 1\}}}$, where ${\mathds{1}_{\{\cdot\}}}$ is the indicator function. 
	Finally, the evolution of the battery level is given by
	\begin{equation}\label{battery_evo}
		b(t+1) = \min\left\lbrace  b(t)+e(t)-d(t) , B \right\rbrace.
	\end{equation}
	\subsection{Status Updating with Partial Battery Knowledge}\label{sec_system_model_partial_battery}
	{As the main distinctive feature of this paper, we consider}
	a practical operation mode of the network in which the edge node is informed about the sensor's battery level (only) via the received \textit{status update packets}. Specifically, each status update packet contains the measured value (status) of the physical quantity, a time stamp representing the time when the sample was generated, and the current battery level of the sensor. 
	{As the inevitable consequence of this status updating procedure}, the edge node has only \textit{partial} knowledge about the battery level at each slot, i.e.,  \textit{outdated} knowledge based on the sensor’s last update. It is worth emphasizing that considering this realistic setting is in stark contrast to the previous works on AoI-aware network designs (see e.g., \cite{hatami2021spawc,Hatami2022AsymptoticallyOpt,hatami2022JointTcom,abd2019reinforcement}) which all assume that perfect battery knowledge is available at the decision-maker (herein, the edge node) at each slot.

	Formally, let ${\tilde{b}(t) \in \{1,2,\dots,B\}}$ denote the edge node's \emph{knowledge} about the sensor's battery level at slot $t$. 
	At slot $t$, let $u(t)$ denote the most recent slot in which the edge node received a status update packet, i.e., ${u(t) = \max \{t'| t'<t, d(t') = 1 \}}$.
	Thus, the true battery level and the knowledge about the level are interrelated as ${\tilde{b}(t) = b(u(t))}$. Specifically, at slot $t$, $\tilde{b}(t)$ indicates what the sensor's battery level was at the beginning of the most recent slot at which the edge node received a status update. Henceforth, we refer to $\tilde{b}(t)$ 
	as 
	the \textit{partial battery knowledge}.
	

	\subsection{On-Demand Age of Information}\label{sec_AoI}
	{We use the \textit{on-demand AoI} metric \cite{hatami2020aoi,hatami2021spawc} to measure the freshness of information seen by the users in our request-based status updating system.}
	Let $\Delta(t)$ be the AoI about the monitored  process at the edge node at the beginning of slot $t$, i.e., the number of slots elapsed since the generation of the latest received update, which 
	is expressed as
	${\Delta(t) = t - u(t)}$.
	We make a common assumption (see e.g., 
	\cite{shao2020partially,chiariotti2021query,AbdElmagid2020AoIOptimalJoint,abd2019reinforcement,Elvina2021SourceDiversityEH,leng2019AoIcognitive,tunc2019optimal,ceran2021learningEH,Maatouk2021OptimalityWhittle,Ceran2019HARQ})
	that $\Delta(t)$ is upper-bounded by a sufficiently large value $\Delta^{\mathrm{max}}$, i.e., ${\Delta(t) \in \{1, 2,\ldots ,\Delta^{\mathrm{max}}\}}$.
	{In addition to tractability, this makes further counting unnecessary once the available measurement becomes excessively outdated.}
	The evolution of $\Delta(t)$ is given by
	\begin{equation}\label{eq_AoI}
		\Delta(t+1)= \left\{
		\begin{array}{ll}
			1,&\mathrm{if} ~d(t)=1, \\
			\min \{\Delta(t)+1,\Delta^{\mathrm{max}}\},&\mathrm{if}~d(t)=0,
		\end{array}
		\right.
	\end{equation}
	which can be written in a compact form as $\Delta(t+1)=\min \{ (1-d(t)) \Delta(t)+1,\Delta^{\mathrm{max}}\}$. 
	
	We define on-demand AoI at slot $t$ as 
	\begin{equation}\label{on-demand-AoI}
		\Delta^\mathrm{OD}(t) 
		\triangleq r(t) \Delta(t+1)
		= r(t) \min \{ (1-d(t)) \Delta(t)+1,\Delta^{\mathrm{max}}\}.
	\end{equation}
	Referring to \eqref{on-demand-AoI}, the requests are made at the beginning of slot $t$ and measurements are sent by the edge node at the end of the same slot, thus, $\Delta(t+1)$ is the AoI perceived by the users.

	\subsection{Problem Formulation}\label{sec_cost}
	
	We aim to find  the best action of the edge node at each time slot, i.e., $a(t)$, $t = 1,2,\ldots$, called an \textit{optimal policy}, that minimizes the average cost (i.e., average on-demand AoI), defined as
	\begin{equation}\label{eq_average_cost_persensor}
		\bar{C}=\lim_{T\rightarrow\infty} \frac{1}{T}\sum_{t=1}^{T} \mathbb{E} [\Delta^\mathrm{OD}(t)],
	\end{equation}
	where the expectation is taken over {all system dynamics, i.e., random process of energy arrivals and requests, as well as the (possibly randomized) policy constructed in response to the requests.}

	\section{POMDP Modeling, Optimal Policy, and Proposed Algorithm}\label{sec_single_sensor}
	We model the problem of finding an optimal policy as a partially observable Markov decision process (POMDP) and develop an iterative algorithm to find such an optimal policy.

	\subsection{POMDP Modeling}\label{sec-OMDP-Modeling} 

	The POMDP is defined by a tuple $(\mathcal{S},\mathcal{O},\mathcal{A},\Pr(s(t+1)| s(t), a(t)),\Pr(o(t)| s(t), a(t-1)),c(s(t),a(t)))$ \cite[Chap.~7]{sigaud2013markov}, with the following elements.
	\begin{itemize}
		\item \textit{State Space $\mathcal{S}$}: Let ${s(t) \in \mathcal{S}}$ denote the system state at slot $t$, which we define as ${s(t) = (b(t),r(t), \Delta(t),\tilde{b}(t))}${, where ${b(t) \in \{0, 1,\ldots, B\}}$ is the battery level, ${r(t) \in \{0,1\}}$ is the request indicator, ${\Delta(t) \in \{1,2,\ldots,\Delta^{\mathrm{max}}\}}$ is the AoI, and ${\tilde{b}(t)\in \{1,2,\dots,B\}}$ is the partial battery knowledge.} 
		The state space $\mathcal{S}$ has a finite dimension ${|\mathcal{S}| = 2B(B+1)\Delta^{\mathrm{max}}}$. We denote the \textit{observable} part of the state (i.e., visible by the edge node) by ${s^{\mathrm{v}}(t) = (r(t), \Delta(t),\tilde{b}(t))}$; thus, ${s(t) = (b(t),s^{\mathrm{v}}(t))}$.
		\item \textit{Observation Space $\mathcal{O}$}:
		Let ${o(t) \in \mathcal{O}}$ be the edge node's \textit{observation} about the system state at slot $t$. We define it as the visible part of the state, i.e., ${o(t) = s^\mathrm{v}(t)}$. The observation space $\mathcal{O}$ has a finite dimension $ |\mathcal{O}| = 2B\Delta^\mathrm{max}$. 
		\item \textit{Action Space $\mathcal{A}$}: At each slot, the edge node decides whether to command the sensor
		or not, i.e., ${a(t) \in \mathcal{A} = \{0,1\}}$.
		\item \textit{State Transition Probability $\Pr(s(t+1)| s(t), a(t))$}: {The state transition probability specifies the probability of transitioning from current state ($s(t)$) ${s =(b, r, \Delta, \tilde{b})}$ to next state ($s(t+1)$) ${s^\prime = (b^\prime ,r^\prime,  \Delta^\prime,\tilde{b}^\prime)}$ when taking a particular action ${a(t) = a}$, which is given by}
		\begin{equation}\label{eq_stp}
			\begin{array}{ll}
				\Pr(b^\prime,r^\prime,  \Delta^\prime,\tilde{b}^\prime \mid b, r,  \Delta,\tilde{b} , a ) = 
				\Pr \big(r^\prime\big) \Pr(b^\prime,\mid b , a) \Pr(\Delta^\prime,\tilde{b}^\prime \mid b,\tilde{b}, \Delta,a),
			\end{array}
		\end{equation}
		where
		\begin{equation}
			{\Pr(r^\prime) = pr^\prime+(1-p)(1-r^\prime),~~~r^\prime\in \{0,1\}}
		\end{equation}
		\begin{equation}
			\begin{aligned}
				&\Pr (b^\prime\mid b = B , a = 0) = \mathds{1}_{\{b^\prime=B\}},~b^\prime \in \{0,1,\dots,B\},\\
				&\Pr (b^\prime\mid  b < B , a = 0) = \left\lbrace 
				\begin{array}{ll}
					{\lambda,} & b^\prime = b + 1,\\
					{1-\lambda,} & b^\prime = b ,\\
					0, & \mbox{otherwise.}
				\end{array}
				\right.\\
				&\Pr (b^\prime \mid b = 0 , a = 1) = \left\lbrace 
				\begin{array}{ll}
					{\lambda,} & b^\prime = 1,\\
					{1-\lambda,} & b^\prime = 0, \\
					0, & \mbox{otherwise.}
				\end{array}
				\right.\\
				&\Pr (b^\prime\mid  b\geq1 , a = 1) = \left\lbrace
				\begin{array}{ll}
					{\lambda,} & b^\prime = b,\\
					{1-\lambda,} & b^\prime = b - 1,\\
					0, & \mbox{otherwise.}
				\end{array}
				\right.
			\end{aligned}
		\end{equation}
		\begin{equation}
			\begin{aligned}
				&\Pr (\Delta^\prime, \tilde{b}^\prime\mid  b,\tilde{b}, \Delta, a = 0) = \mathds{1}_{\{\Delta^\prime= \min\{\Delta+1,\Delta^{\mathrm{max}}\}, \tilde{b}^\prime = \tilde{b}\}},\\
				&\Pr (\Delta^\prime, \tilde{b}^\prime\mid  b = 0 , \tilde{b}, \Delta, a = 1 ) = \mathds{1}_{\{\Delta^\prime= \min\{\Delta+1,\Delta^{\mathrm{max}}\}, \tilde{b}^\prime = \tilde{b}\}},\\
				&\Pr (\Delta^\prime, \tilde{b}^\prime\mid  b \geq 1 , \tilde{b}, \Delta, a = 1 ) = \mathds{1}_{\{\Delta^\prime= 1, \tilde{b}^\prime = b\}}.
			\end{aligned}
		\end{equation}
		\item \textit{Observation Function ${\Pr(o(t) \mid s(t), a(t-1))}$}: The observation function is the
		probability of observing $o(t)$ given a state $s(t)$
		and an action $a(t-1)$. In our model, this is given by ${\Pr(o(t)| s(t), a(t-1))=\Pr ({o(t)}| b(t),s^{\mathrm{v}}(t),a(t-1)) = \mathds{1}_{\{ o(t) = s^{\mathrm{v}}(t)\}}}$.
		\item \textit{Immediate Cost Function $c(s(t),a(t))$}:
		The immediate cost of taking action $a(t)$ in state ${s(t)=(b(t),r(t),\Delta(t),\tilde{b}(t))}$ is ${c(s(t),a(t)) = r(t)\min\{(1-a(t)\mathds{1}_{\{b(t)\geq1\}})\Delta(t)+1,\Delta^{\mathrm{max}}\}}$.
	\end{itemize}
	
	\subsection{Belief-State}\label{sec-belief-def} 
	As the POMDP formulation above implies, the system state $s(t)$ is not fully observable by the edge node -- the decision-maker -- at slot $t$. To reiterate, the state consists of two parts as ${s(t)=(b(t),s^{\mathrm{v}}(t))}$. Consequently, at slot $t$, the exact battery level $b(t)$ is unknown to the edge node, whereas the requests, AoI, and partial battery knowledge -- captured by $s^{\mathrm{v}}(t)$ -- are observable. This incomplete state information in a POMDP causes challenges for (optimal) status updating because the edge node can make decisions only based on the available
	information or on the quantities derived from that information. To counteract such insufficiency in the state information, we need to define state-like quantities that preserve the Markov property and summarize all the necessary information for the edge node pertaining to finding an optimal policy. These are called \textit{sufficient information states} \cite[Chapter~7]{sigaud2013markov}, 
	and defined as follows. 
	

	\begin{Definition}[Sufficient information state {\cite[Chapter~7]{sigaud2013markov}}]
		Let $\phi^\mathrm{c}(t)$ be the \textit{complete information state} at slot $t$, which consists of an initial probability distribution over the states and the history of observations and actions starting from ${t = 1}$, i.e., ${\{o(1),\dots,o(t),a(1),\dots,a(t-1)\}}$.
		Let $\phi(t)$ be any information state derived from $\phi^\mathrm{c}(t)$. The sequence $\{\phi(t)\}$ is said to be \textit{sufficient} in regard to finding the optimal policy when, for any slot $t$, it satisfies 
		\begin{equation}\notag
			\begin{aligned}
				&1)~\phi(t) = f(\phi(t-1),o(t),a(t-1)),\\
				&2)~\Pr(s(t)\mid \phi(t)) = \Pr(s(t) \mid \phi^\mathrm{c}(t)), \\
				&3)~\Pr(o(t) \mid \phi(t-1),a(t-1)) =  \Pr(o(t) \mid \phi^\mathrm{c}(t-1),a(t-1)),
			\end{aligned}
		\end{equation}
		where $f(\cdot)$ is an update function defining the information state process $\{\phi(t)\}$.
	\end{Definition}
	
	
	One sufficient information state is a \textit{belief-state}. We define the belief-state at slot $t$ as ${z(t) = \left({\beta}(t), s^{\mathrm{v}}(t)\right) \in \mathcal{Z}}$, where $\beta(t)$ is \textit{belief} about the battery level\footnote{In general, the belief associated with a POMDP is a probability distribution over the \textit{entire} state space $\mathcal{S}$. However, because $s^{\mathrm{v}}(t)$ is fully observable in our problem, it has no uncertainty to be modelled via a belief. {In fact, this particular type of a POMDP encountered in this paper is (sometimes) called a mixed observable MDP (MOMDP) \cite{araya2010closerMOMDP}.}} $b(t)$ and $\mathcal{Z}$ is the belief-state space. The belief is a $({B+1})$-dimensional vector ${\beta(t) = (\beta_{0}(t),\dots,\beta_{B}(t))\tran{\in \mathcal{B}}}$, $\sum_{j = 0}^{B} \beta_j(t) = 1$, that gives the probability distribution on the possible values of the battery levels at slot $t$, where ${\mathcal{B} = [0,1]^{B+1} {\subset \mathbb{R}^{B+1}}}$ is the belief space. Formally, the belief $\beta(t)$ determines the conditional probability distribution that the battery level has a specific value at slot $t$, given the complete information state $\phi^\mathrm{c}(t)$. Accordingly, the entries of $\beta(t)$ are defined as 
	\begin{equation}\label{eq_def_update}
		\beta_{j}(t) \triangleq \Pr(b(t) = j \mid \phi^\mathrm{c}(t)), ~j \in \{0,1,\dots,B\}.
	\end{equation}

	The belief is updated at each slot based on the previous belief, the current observation, and the previous action. That is, ${\beta(t+1) = \tau(\beta(t),o(t+1),a(t))}$, where the belief update function $\tau(\cdot)$ is {given by the following theorem}.
	\begin{theorem}\label{lemma_beleief_update}
		The belief update function $\tau(\cdot)$ is given by
		\begin{equation}\label{eq_belief_update}
			\beta(t+1) =  \tau(\beta(t),o(t+1),a(t)) =  \left\lbrace 
			\begin{array}{ll}
				\boldsymbol{\Lambda}\beta(t), \hspace{0.0mm} & a(t) = 0,\\
				{\rho}^{0},\hspace{0.0mm}
				&a(t) = 1, \Delta(t+1) > 1,\\
				{\rho}^{1},\hspace{0.0mm} &a(t) = 1, \Delta(t+1) =1,\tilde{b}(t+1)=1,\\
				\dots\\
				{\rho}^{B},\hspace{0.0mm} &a(t) = 1,\Delta(t+1) =1,\tilde{b}(t+1)=B,
			\end{array}
			\right.
		\end{equation}
		where {$\Delta(t+1)$ and $\tilde{b}(t+1)$ are entries of $o(t+1)$,} the matrix $\boldsymbol{\Lambda} \in [0,~ 1]^{(B+1)\times(B+1)}$ is a left stochastic matrix, i.e., $\mathbf{1}^{\tran} \boldsymbol{\Lambda} = \mathbf{1}^{\tran}$, having a banded form as
		\begin{equation}\label{eq_matrix_lambda}
			\boldsymbol{\Lambda} = 
			\begin{pmatrix}
				1-\lambda & 0 & \cdots & 0 & 0 & 0 \\
				\lambda & 1-\lambda & \cdots & 0 & 0 & 0\\
				\vdots  & \vdots  & \ddots & \vdots & \vdots  \\
				0 & 0 & \cdots & \lambda & 1-\lambda & 0\\
				0 & 0 & \cdots & 0 & \lambda & 1
			\end{pmatrix},
		\end{equation}
		and the vectors ${{\rho}^{j}\in[0,~1]^{B+1}}$, ${j=0,1,\ldots,B}$, are given by
		\begin{equation}\label{eq_vectors_rho}
			\begin{array}{ll}
				& {{\rho}^{0}} = {\rho}^{1} = \big(
				1-\lambda,\lambda,0,0,\ldots,0,0\big) \tran \\
				&{\rho}^{2} = 
				\big(0,1-\lambda,\lambda,0,0,\ldots,0,0\big) \tran \\
				& \vdots\\
				&{\rho}^{B} = 
				\big(0,0,\ldots,0,0,1-\lambda,\lambda\big) \tran.
			\end{array}
		\end{equation}
	\end{theorem}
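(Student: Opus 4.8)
The plan is to derive \eqref{eq_belief_update} from the standard POMDP filtering (Bayes) recursion, then specialize it to $a(t)=0$ and $a(t)=1$ using the factored transition kernel \eqref{eq_stp}. By the definition \eqref{eq_def_update}, $\beta_j(t+1)=\Pr(b(t+1)=j\mid \phi^{\mathrm{c}}(t),a(t),o(t+1))$, and since in our (mixed-observable) model $o(t+1)=s^{\mathrm{v}}(t+1)$ is a deterministic function of the next state, conditioning on $o(t+1)$ just restricts to the next states whose visible part equals the realized $o(t+1)=(r',\Delta',\tilde{b}')$. First I would marginalize over the current (hidden) battery level against $\beta(t)$, using the product form of \eqref{eq_stp}:
\[
\Pr\!\big(b(t+1)=j,\,s^{\mathrm{v}}(t+1)=o(t+1)\mid \phi^{\mathrm{c}}(t),a(t)\big)=\sum_{i=0}^{B}\beta_i(t)\,\Pr(r')\,\Pr(j\mid i,a(t))\,g_i,\qquad g_i:=\Pr\!\big(\Delta',\tilde{b}'\mid i,\tilde{b}(t),\Delta(t),a(t)\big).
\]
Dividing by the same expression summed over $j$, cancelling the common factor $\Pr(r')$, and using $\sum_j\Pr(j\mid i,a)=1$ in the denominator yields the master formula $\beta_j(t+1)=\big(\sum_i\beta_i(t)\,\Pr(j\mid i,a(t))\,g_i\big)\big/\big(\sum_i\beta_i(t)\,g_i\big)$.

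For $a(t)=0$ I would observe that the relevant line of \eqref{eq_stp} gives $g_i=\mathds{1}_{\{\Delta'=\min\{\Delta(t)+1,\Delta^{\mathrm{max}}\},\ \tilde{b}'=\tilde{b}(t)\}}$, which is \emph{independent of} $i$; hence it cancels from numerator and denominator, the visible observation carries no information about the battery, and the update reduces to the pure prediction step $\beta_j(t+1)=\sum_i\Pr(j\mid i,0)\,\beta_i(t)$. Reading $\Pr(j\mid i,0)$ off \eqref{eq_stp} (stay with probability $1-\lambda$, increment with probability $\lambda$, absorb at $B$) identifies this sum with $(\boldsymbol{\Lambda}\beta(t))_j$ for the banded matrix \eqref{eq_matrix_lambda}; since every column of $\boldsymbol{\Lambda}$ sums to one we have $\mathbf{1}^{\tran}\boldsymbol{\Lambda}=\mathbf{1}^{\tran}$, so $\boldsymbol{\Lambda}\beta(t)$ is already a probability vector and no renormalization is needed.

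For $a(t)=1$ I would use $d(t)=a(t)\mathds{1}_{\{b(t)\ge1\}}=\mathds{1}_{\{b(t)\ge1\}}$, so by \eqref{eq_stp} $g_0=\mathds{1}_{\{\Delta'=\min\{\Delta(t)+1,\Delta^{\mathrm{max}}\},\ \tilde{b}'=\tilde{b}(t)\}}$ while $g_i=\mathds{1}_{\{\Delta'=1,\ \tilde{b}'=i\}}$ for $i\ge1$. Because $\Delta(t)\ge1$ and $\Delta^{\mathrm{max}}\ge2$, we have $\min\{\Delta(t)+1,\Delta^{\mathrm{max}}\}\ge2$, so the two events $\{\Delta(t+1)>1\}$ and $\{\Delta(t+1)=1\}$ separate the cases cleanly. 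If $\Delta(t+1)>1$, then $g_i=0$ for all $i\ge1$ and only $i=0$ survives, giving $\beta_j(t+1)=\Pr(j\mid 0,1)$, i.e. $\rho^0$. If $\Delta(t+1)=1$, then $g_0=0$ and among $i\ge1$ only $i=\tilde{b}'=\tilde{b}(t+1)$ survives (the level reported in the fresh update equals the true battery at that slot), giving $\beta_j(t+1)=\Pr(j\mid \tilde{b}(t+1),1)$, i.e. $\rho^{\tilde{b}(t+1)}$. Evaluating $\Pr(j\mid b,1)$ from \eqref{eq_stp} for $b=0,1,\dots,B$, equivalently from $b(t+1)=\min\{b+e(t)-d(t),B\}$ in \eqref{battery_evo}, produces exactly the vectors in \eqref{eq_vectors_rho}; in particular $\rho^0=\rho^1$ since both $(b=0,d=0)$ and $(b=1,d=1)$ leave the battery equal to $e(t)$.

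The steps above are almost entirely substitution into \eqref{eq_stp} and elementary algebra; the only points requiring care are (i) justifying the division in the master formula, namely that $\sum_i\beta_i(t)\,g_i>0$ — this holds because $o(t+1)$ is a \emph{realized} observation and therefore has positive probability under the current belief — and (ii) the bookkeeping that makes the $a(t)=1$ update collapse onto a single index $i$, which rests on the inequality $\min\{\Delta(t)+1,\Delta^{\mathrm{max}}\}\ge2$ together with the modeling fact that a received status update literally carries the value $b(t)$, so $\tilde{b}(t+1)=b(t)$ whenever $d(t)=1$. I expect point (ii) — tracking which $i$ are consistent with the observed $(\Delta(t+1),\tilde{b}(t+1))$ — to be the main obstacle, and everything else to be routine.
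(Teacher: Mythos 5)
Your proposal is correct and follows essentially the same route as the paper's proof: the Bayes filtering recursion with marginalization over the hidden battery level against $\beta(t)$, the factored kernel \eqref{eq_stp}, and a case split on $a(t)$ and on whether $\Delta(t+1)>1$ or $\Delta(t+1)=1$ with $\tilde{b}(t+1)=m$, yielding $\boldsymbol{\Lambda}\beta(t)$, $\rho^0$, and $\rho^m$ respectively. The only (cosmetic) difference is that you cancel the request factor $\Pr(r')$ up front and normalize via the master-formula denominator, whereas the paper treats the $r'=1$ and $r'=0$ observations separately and computes the normalization constant $\zeta$ explicitly in each case.
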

	\begin{proof}
		The details of the proof are presented in Appendix~\ref{sec_appendix_lemma_beleief_update}. 
		Intuitively, when $a(t) = 0$, the edge node does not receive an update,
		and thus, the belief is updated based on the previous belief and the fact that the energy arrivals are modeled as independent Bernoulli variables. To exemplify, the probability that $b(t+1) = 0$ (i.e., $\beta_0(t+1)$) is the product of two independent probabilities: the probability that the battery level was zero at $t$ (i.e., $\beta_0(t)$) and the sensor did not receive one unit of energy during slot $t$ (i.e., $1-\lambda$). Thus, $\beta_0(t+1) = \Pr(e(t) = 0)\beta_0(t) = (1-\lambda) \beta_0(t)$ (see the first row of the matrix $\boldsymbol{\Lambda}$). By the similar logic, $\beta_1(t+1) = \Pr(e(t) = 1)\beta_0(t) + \Pr(e(t) = 0) \beta_1(t) = \lambda\beta_0(t) + (1-\lambda) \beta_1(t)$ (see the second row of  $\boldsymbol{\Lambda}$), and etc.
		For the case where $a(t) = 1$, if the edge node does not receive an update (i.e., $\Delta(t+1) > 1$), then it is inferred that $b(t) = 0$. Thus, we either have $b(t+1) = 0$ or $b(t+1) = 1$, which happens with probability $1-\lambda$ and $\lambda$, respectively (see ${{\rho}^{0}}$). 
		For the case where $a(t) = 1$ and the edge node receives an update (i.e., $\Delta(t+1) = 1$), the edge node also receives $b(t)$ as part of the update packet ($b(t) = i \geq 1$). Besides, note that one unit of energy has been consumed to send the update. Therefore, the belief about the battery level at $t+1$ is $1-\lambda$ and $\lambda$ at entries $i-1$ and $i$ (see ${{\rho}^{1}}$, $\dots$, ${{\rho}^{B}}$).
	\end{proof}

	

	\subsection{Optimal Policy and Proposed Algorithm}\label{sec-per-senspr-optimal-policy}
	In this section, we find an optimal policy for the POMDP. Formally, a policy $\pi$ 
	decides which action $a$ to take at a particular belief-state $z$. The policy $\pi$ is either randomized or deterministic. 
	A randomized policy is determined by a  distribution ${\pi(a \mid z) : \mathcal{Z} \times \mathcal{A} \rightarrow \left[ 0 , 1\right]}$, whereas a deterministic policy is determined by a mapping ${\pi: \mathcal{Z} \rightarrow \mathcal{A}}$.
	For deterministic policies, we use $\pi(z)$ to denote the action taken in belief-state $z$  by a deterministic policy $\pi$.
	Under a policy $\pi$, the average cost is given by (see \eqref{eq_average_cost_persensor})
	\begin{equation}\label{eq_average_cost_policy}
		\bar{C}_{\pi}=\lim_{T\rightarrow\infty} \frac{1}{T}\sum_{t=1}^{T} \mathbb{E}_{\pi}[\Delta^{\mathrm{OD}}(t) \mid z(0)],
	\end{equation}
	where $z(0)$ is the initial belief-state\footnote{We assume that all policies $\pi$  induce a Markov chain with a single recurrent class plus a (possibly empty) set of transient states (i.e., the uni-chain condition is satisfied). Consequently, the minimum average cost  does not depend on the initial state \cite[Chapter~8]{puterman2014markov}. It is worth noting that, in general, checking the uni-chain condition for an MDP is NP-Hard \cite{TSITSIKLIS2007NPhardUnichain}. Importantly, 
		{the} 
		assumption makes problem \eqref{eq_average_cost_persensor_pomdp} well-posed so that we can use the tools associated with the uni-chain MDPs.}.
	{Note that $\mathbb{E}_{\pi}[\cdot]$
		denotes the expected value of the on-demand AoI
		when the policy $\pi$ is employed.}
	We aim to find an optimal policy 
	that minimizes \eqref{eq_average_cost_policy}, i.e., 
	\begin{equation}\label{eq_average_cost_persensor_pomdp}
		\pi^* \in \argmin_{\pi} \bar{C}_{\pi},
	\end{equation}
	where the minimization is with respect to all deterministic or randomized policies.
	The following theorem characterizes an optimal  policy $\pi^*$. 
	\begin{theorem}\label{theorem_bellman_eq}
		An optimal policy $\pi^*$ is obtained by solving the
		following equations:
		\begin{equation}\label{eq_ballman_pomdp}
			\bar{C}^* + h(z) = \min_{a \in \mathcal{A}} Q(z,a),~ z \in \mathcal{Z},
		\end{equation}
		where $h(z)$ is a relative value function, $\bar{C}^*$ is the optimal average cost achieved by $\pi^*$ which is independent of the initial state $z(0)$, and $Q(z, a)$ is an action-value function, which for belief-state ${z = (\beta,r,\Delta,\tilde{b}) \in \mathcal{Z}}$ and action ${a \in \{0,1\}}$, is given by  
		\begin{subequations}\label{eq_bellman_q_pomdp}
			\begin{align}
				Q(z , 0)  &= r \min\{\Delta+1,\Delta^{\mathrm{max}}\}+ \notag \\
				& \hspace{5mm} \sum_{r^\prime = 0}^{1} [r^\prime p + 
				(1-r^\prime)(1-p)] h(\boldsymbol{\Lambda} \beta,r^\prime,\min\{\Delta+1,\Delta^{\mathrm{max}}\}, \tilde{b}), \\
				Q(z ,1)  &= r[\beta_{0} \min\{\Delta+1,\Delta^{\mathrm{max}}\} + (1-\beta_{0})] + \beta_0 \sum_{r^\prime = 0}^{1} [r^\prime p + (1-r^\prime)(1-p)]\notag \\
				&h({\rho}^{0},r^\prime,\min\{\Delta+1,\Delta^{\mathrm{max}}\},\tilde{b})  + \sum_{j = 1}^{B} \beta_j \big[p h({\rho}^{j},1,1,j) + (1-p) h({\rho}^{j},0,1,j) \big].
			\end{align}
		\end{subequations}
		Further, an optimal action taken 
		in belief-state $z$ is obtained as 
		\begin{equation}\label{eq_optimal_policy}
			\pi^*(z) \in \argmin_{a\in\mathcal{A}} Q(z,a),~ z\in \mathcal{Z}.  
		\end{equation}
	\end{theorem}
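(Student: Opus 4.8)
The plan is to reduce the POMDP to an equivalent average-cost Markov decision process whose state is the belief-state $z\in\mathcal{Z}$, invoke the standard average-cost optimality (Bellman) equation for that MDP, and then specialize it by computing the action-value function $Q(z,a)$ in closed form from the belief update of Theorem~\ref{lemma_beleief_update} and the immediate cost $c(s,a)$. First I would recall that, by construction, $z(t)=(\beta(t),s^{\mathrm v}(t))$ with the update rule $z(t+1)=\tau(\beta(t),o(t+1),a(t))$ is a sufficient information state in the sense of the Definition: property~1) is the recursion of Theorem~\ref{lemma_beleief_update}, while the defining relation~\eqref{eq_def_update} together with the trivial observation function $\Pr(o\mid s,a)=\mathds 1_{\{o=s^{\mathrm v}\}}$ yields properties~2) and~3). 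Hence (see \cite[Chap.~7]{sigaud2013markov}) the POMDP is equivalent to a fully observed MDP with state space $\mathcal{Z}$, action space $\mathcal{A}$, one-step cost equal to the conditional expectation $\bar c(z,a)=\sum_{b}\beta_b\, c\big((b,s^{\mathrm v}),a\big)$, and transition kernel obtained by pushing the observation law $\Pr(o(t+1)\mid z(t),a(t))$ through $\tau$; the average cost $\bar C_\pi$ of~\eqref{eq_average_cost_policy} is unchanged under this reduction since the two processes induce the same cost sequence in distribution.

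Next I would apply the average-cost dynamic-programming theory for uni-chain MDPs \cite[Chap.~8]{puterman2014markov}: under the uni-chain assumption of the footnote, there exist a scalar $\bar C^*$ and a relative value function $h:\mathcal{Z}\to\reals$ solving $\bar C^*+h(z)=\min_{a\in\mathcal{A}}\{\bar c(z,a)+\mathbb{E}[h(z(t+1))\mid z(t)=z,a(t)=a]\}$, with $\bar C^*$ equal to the optimal average cost and independent of $z(0)$, and with every policy that selects in each $z$ an action attaining the right-hand-side minimum being average-cost optimal. Identifying the bracketed quantity with $Q(z,a)$ gives~\eqref{eq_ballman_pomdp} and~\eqref{eq_optimal_policy}.

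It then remains to evaluate $Q(z,a)=\bar c(z,a)+\mathbb{E}[h(z(t+1))\mid z,a]$ explicitly. For $a=0$ no update is received, so $d(t)=0$, the AoI moves deterministically to $\min\{\Delta+1,\Delta^{\mathrm{max}}\}$, $\tilde b$ is unchanged, and by Theorem~\ref{lemma_beleief_update} the belief becomes $\boldsymbol\Lambda\beta$ deterministically; the only randomness in $o(t+1)$ is the next request $r'$, equal to $1$ with probability $p$, which yields the stated $Q(z,0)$. For $a=1$ I would condition on the hidden battery level, distributed according to $\beta$: with probability $\beta_0$ the battery is empty, no update is sent, the AoI becomes $\min\{\Delta+1,\Delta^{\mathrm{max}}\}$, and the belief becomes $\rho^0$; with probability $\beta_j$ for $j\ge1$ an update carrying $b(t)=j$ is received, so the observed quantities are $\Delta(t+1)=1$ and $\tilde b(t+1)=j$, and the belief becomes $\rho^j$. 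Collecting the immediate costs $r\min\{\Delta+1,\Delta^{\mathrm{max}}\}$ (empty battery) and $r\cdot 1$ (update sent) weighted by $\beta_0$ and $1-\beta_0$, and attaching the request probabilities $p$ and $1-p$ to the continuation terms $h(\cdot)$, reproduces exactly~\eqref{eq_bellman_q_pomdp}.

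The main obstacle I anticipate is the bookkeeping in this last step: one must correctly marginalize the unobserved battery level over $\beta$ to obtain the joint law of $o(t+1)=(r',\Delta',\tilde b')$ given $(z,a)$, and verify that composing this law with the piecewise map $\tau$ of Theorem~\ref{lemma_beleief_update} recovers precisely its listed cases --- in particular that, under $a=1$, the event $\Delta(t+1)>1$ coincides with $\{b(t)=0\}$ and carries belief-probability $\beta_0$, while $\{\tilde b(t+1)=j\}$ for $j\ge1$ carries probability $\beta_j$. A secondary point worth a short remark is that $\mathcal{Z}$ is not finite, so the clean existence of $(\bar C^*,h)$ must be justified either via the uni-chain assumption together with the fact (established later in the paper through truncation of the belief space) that only finitely many belief-states are reachable, or via the corresponding average-cost results for MDPs on general state spaces.
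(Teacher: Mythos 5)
Your proposal is correct and follows essentially the same route as the paper: invoke the average-cost optimality equation for the belief-state formulation (the paper cites it directly for POMDPs, you route it through the sufficient-information-state reduction plus Puterman), then compute $Q(z,a)$ by weighting the immediate cost and the continuation value $h$ with the belief $\beta$ over the hidden battery level and using the belief transitions of Theorem~\ref{lemma_beleief_update} for the cases $a=0$, and $a=1$ with $b=0$ versus $b\geq 1$. Your closing remark on justifying existence of $(\bar C^*,h)$ over the infinite belief space is a point the paper's proof itself leaves implicit (it relies on the uni-chain assumption and the cited references), so it is a fair caveat rather than a gap.
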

	\begin{proof}
		{See Appendix~\ref{sec_appendix_theorem_bellman_eq}.}
	\end{proof}
	

	Bellman's optimality equation \eqref{eq_bellman_q_pomdp} can be solved iteratively through a method called relative value iteration algorithm (RVIA) \cite[Section~8.5.5]{puterman2014markov}. Specifically, at each iteration ${i=1,2,\ldots}$, 
	we update {$Q^{(i)}(z,a)$ in \eqref{eq_bellman_q_pomdp} by using $h^{(i-1)}(z)$, {leading to the following updates:}
		\begin{equation}\label{eq_v_itr}
			\begin{aligned}
				&V^{(i)}(z) =\min_{a\in\mathcal{A}}Q^{(i)}(z,a), ~ z \in \mathcal{Z}\\
				& h^{(i)}(z) = V^{(i)}(z) - V^{(i)}(z_{\mathrm{ref}}),
			\end{aligned}
	\end{equation}}
	where ${z_{\mathrm{ref}} \in \mathcal{Z}}$ is an {arbitrarily chosen} reference belief-state.
	Regardless of $V^{(0)}(z)$, the sequences $\{Q^{(i)}(z,a)\}_{i=1,2,\ldots}$, $\{h^{(i)}(z)\}_{i=1,2,\ldots}$, and $\{V^{(i)}(z)\}_{i=1,2,\ldots}$ converge \cite[Section~8.5.5]{puterman2014markov}, i.e., ${\lim_{i\to\infty}Q^{(i)}(z,a) = Q(z,a)}$, ${\lim_{i\to\infty}h^{(i)}(z) = h(z)}$, and ${\lim_{i\to\infty}V^{(i)}(z) = V(z)}$, $\forall z$.
	Thus, $h(z) = V(z) - V(z_{\mathrm{ref}})$ satisfies \eqref{eq_ballman_pomdp} 
	and $\bar{C}^* = V(z_{\mathrm{ref}})$.

	Although the sequences in \eqref{eq_v_itr} converges, finding $V(z)$ (and $h(z)$) iteratively via \eqref{eq_v_itr} is intractable, because the belief space $\mathcal{B}$ has infinite dimension. Fortunately, the evolution of the beliefs $\{\beta(t)\}_{t = 0,1,\dots}$ has a certain pattern which we exploit to truncate the belief space $\mathcal{B}$ and subsequently, to develop a practical iterative algorithm relying on \eqref{eq_v_itr}.
	To illustrate the pattern, consider an initial belief $\beta(0) = \beta$. Then, by Theorem~\ref{lemma_beleief_update},
	when action ${a = 0}$ is taken, the next belief is ${\beta(1) = \boldsymbol{\Lambda} \beta}$ and when action ${a=1}$ is taken, the next belief is ${\beta(1) \in \{{\rho}^{1},\dots,{\rho}^{B}\}}$. {Thus}, the belief at slot ${t = 1}$ belongs to the set $\{\boldsymbol{\Lambda} \beta,\{{\rho}^{j}\}_{j=1}^{B}\}$. Similarly, the belief at ${t = 2}$ belongs to ${\{\boldsymbol{\Lambda}^2 \beta,\{\boldsymbol{\Lambda}{\rho}^{j}\}_{j=1}^{B},\{{\rho}^{j}\}_{j=1}^{B}\}}$, the belief at ${t = 3}$ belongs to  ${\{\boldsymbol{\Lambda}^3 \beta,\{\boldsymbol{\Lambda}^{2}{\rho}^{j}\}_{j=1}^{B},\{\boldsymbol{\Lambda}{\rho}^{j}\}_{j=1}^{B},\{{\rho}^{j}\}_{j=1}^{B}\}}$, and so on. This pattern in the belief evolution is depicted in Fig.~\ref{fig_belief_Transistion}.
	Accordingly, the belief space $\mathcal{B}$ containing all the possible beliefs $\beta(t)$, $\forall t$, is infinite but \textit{countable} given initial belief $\beta(0) {= \beta}$ and energy harvesting rate $\lambda$, as shown in Table \ref{tab:belief_set}.
	

	\begin{table}[t]
		\centering
		\caption{
			Illustration of the belief space $\mathcal{B}$ and the truncated belief-space $\hat{\mathcal{B}}$ given an initial belief $\beta(0) = \beta$. In the simulation results, the \textit{row} and \textit{column} numbers are used to represent each belief, e.g., $\boldsymbol{\Lambda}^2\boldsymbol{\rho^{1}}$ is shown by $(1,2)$.}
		\label{tab:belief_set}
		\scalebox{0.9}{
			\begin{adjustbox}{width=\columnwidth,center}
				\begin{tabular}{c:c  c  c  c c c c c}
					&\gray{0}&\gray{1}&\gray{2}&\gray{3}&$\dots$&$\gray{M}$&$\gray{M+1}$&\dots \\
					\hdashline
					\noalign{\vskip 0.08cm}
					\gray{0}&$\beta$ & $\boldsymbol{\Lambda}\beta$ & $\boldsymbol{\Lambda}^2\beta$ & $\boldsymbol{\Lambda}^3\beta$ & $\dots$ & $\boldsymbol{\Lambda}^{M}\beta$ & $\boldsymbol{\Lambda}^{M+1}\beta$ & $\dots$ \\
					\gray{1}&${\rho}^{1}$ & $\boldsymbol{\Lambda}{\rho}^{1}$ & $\boldsymbol{\Lambda}^{2}{\rho}^{1}$ & $\boldsymbol{\Lambda}^{3}{\rho}^{1}$ & $\dots$ & $\boldsymbol{\Lambda}^{M}{\rho}^{1}$ & $\boldsymbol{\Lambda}^{M+1}{\rho}^{1}$ & $\dots$\\
					$\vdots$&$\vdots$ & $\vdots$ & $\vdots$ & $\vdots$ & $\ddots$ & $\vdots$ & $\vdots$ &  $\dots$ \\
					$\gray{B}$&${\rho}^{B}$ & $\boldsymbol{\Lambda}{\rho}^{B}$ & $\boldsymbol{\Lambda}^{2}{\rho}^{B}$ & $\boldsymbol{\Lambda}^{3}{\rho}^{B}$ & $\dots$ & $\boldsymbol{\Lambda}^{M}{\rho}^{B}$& $\boldsymbol{\Lambda}^{M+1}{\rho}^{B}$ & $\dots$\\
					\noalign{\vskip -0.20cm}
					&\multicolumn{6}{c}{\upbracefill}\\
					\noalign{\vskip -0.05cm}
					&\multicolumn{6}{c}{$\scriptstyle \hat{\mathcal{B}}$}\\
				\end{tabular}
			\end{adjustbox}
		}
	\end{table}
	
	\begin{figure}[t!]
		\centering
		{\includegraphics[width=0.6\columnwidth]{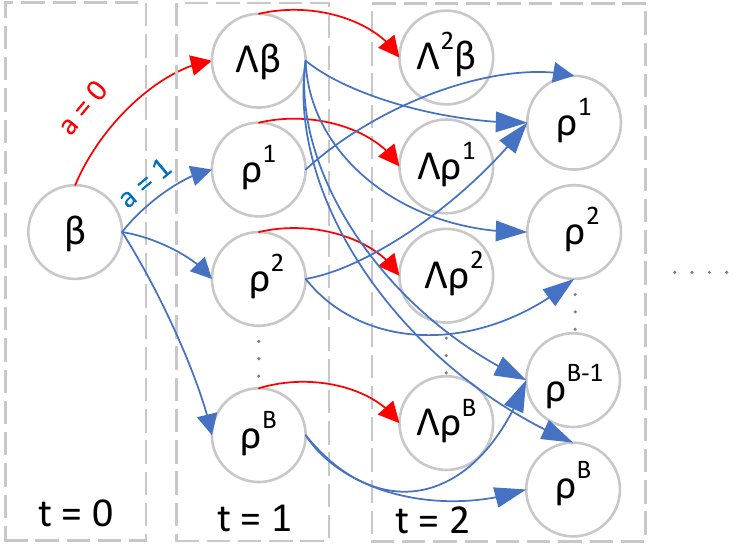}}
		\caption{Evolution of the belief $\beta(t)$ over time. {Red arrow: no command $a=0$; blue arrow: command $a=1$.}}
		\label{fig_belief_Transistion}
	\end{figure}

	The following lemma reveals the key property of matrix $\boldsymbol{\Lambda}$ in \eqref{eq_matrix_lambda}, which will be used to truncate the belief space $\mathcal{B}$ into a finite space. 
	\begin{lemm}\label{lemm_struct_LAMBDA}
		The $m$th power of matrix $\boldsymbol{\Lambda}$ is given by
		\begin{align}\label{eq_Lambda_power_m}
			\Lambda_{j,l}^m  =  \left \lbrace
			\begin{array}{ll}
				0,   & j < l, \\
				(1-\lambda)^m,  & j = l,\\
				\lambda^{(j-l)}(1-\lambda)^{(m-j+l)}\prod_{v = 0}^{j-l-1}\frac{(m-v)}{(v+1)},  & l < j \leq B,\\
				1 - \sum_{j^\prime = 1}^{B}\Lambda_{j^\prime,l}^m,  & j = B + 1, \forall l,
			\end{array}
			\right.
		\end{align}
		where $\Lambda_{j,l}^m$ is the {entry} of matrix $\boldsymbol{\Lambda}^m$ at its $j$th row and $l$th column.
	\end{lemm}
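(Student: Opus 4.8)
The plan is to prove the formula for $\Lambda_{j,l}^m$ by induction on $m$, exploiting the banded (lower-bidiagonal, on the non-absorbing block) structure of $\boldsymbol{\Lambda}$. The base case $m=1$ is immediate: for $j<l$ we get $0$; for $j=l$ we get $1-\lambda$, matching $(1-\lambda)^1$; for $j=l+1$ the product $\prod_{v=0}^{-1+1-1}$ is empty so the formula gives $\lambda^{1}(1-\lambda)^{0}\cdot 1=\lambda$, which is exactly the subdiagonal entry; for $j>l+1$ the exponent structure forces $0$ once one checks the product; and the last row is defined by column-stochasticity, which $\boldsymbol{\Lambda}$ satisfies ($\mathbf 1^{\tran}\boldsymbol{\Lambda}=\mathbf 1^{\tran}$ implies $\mathbf 1^{\tran}\boldsymbol{\Lambda}^m=\mathbf 1^{\tran}$).

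For the inductive step I would write $\boldsymbol{\Lambda}^{m+1}=\boldsymbol{\Lambda}\boldsymbol{\Lambda}^m$ and expand $\Lambda^{m+1}_{j,l}=\sum_{k}\Lambda_{j,k}\Lambda^m_{k,l}$. Because $\boldsymbol{\Lambda}$ has nonzero entries only at $\Lambda_{j,j}=1-\lambda$ (with $\Lambda_{B+1,B+1}=1$) and $\Lambda_{j,j-1}=\lambda$, this sum collapses to just two terms: $\Lambda^{m+1}_{j,l}=(1-\lambda)\Lambda^m_{j,l}+\lambda\,\Lambda^m_{j-1,l}$ for $1\le j\le B$ (the indexing here is the $1,\dots,B+1$ row/column convention of the lemma, so ``$j-1$'' is the row one above). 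Substituting the inductive hypothesis into both terms and simplifying the resulting combination of products $\prod_{v=0}^{j-l-1}\frac{m-v}{v+1}$ is the crux: one needs the Pascal-type identity $\binom{m}{k-1}+\binom{m}{k}=\binom{m+1}{k}$ rewritten in the ``falling-factorial-over-factorial'' form $\prod_{v=0}^{k-1}\frac{m-v}{v+1}$ to check that $(1-\lambda)\lambda^{j-l}(1-\lambda)^{m-j+l}\prod_{v=0}^{j-l-1}\frac{m-v}{v+1}+\lambda\cdot\lambda^{j-l-1}(1-\lambda)^{m-j+l+1}\prod_{v=0}^{j-l-2}\frac{m-v}{v+1}$ equals $\lambda^{j-l}(1-\lambda)^{m+1-j+l}\prod_{v=0}^{j-l-1}\frac{m+1-v}{v+1}$. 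Factoring out the common $\lambda^{j-l}(1-\lambda)^{m-j+l+1}$ reduces this to verifying $\prod_{v=0}^{j-l-1}\frac{m-v}{v+1}+\prod_{v=0}^{j-l-2}\frac{m-v}{v+1}=\prod_{v=0}^{j-l-1}\frac{m+1-v}{v+1}$, i.e.\ $\binom{m}{j-l}+\binom{m}{j-l-1}=\binom{m+1}{j-l}$, which is Pascal's rule. The corner cases $j=l$ (where only the diagonal term survives, giving $(1-\lambda)(1-\lambda)^m$) and $j=l+1$ (where the second product is empty) should be dispatched separately to keep the index bookkeeping honest; the $j<l$ case is trivially $0+0$.

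The main obstacle I anticipate is purely notational: reconciling the lemma's $1,\dots,B+1$ row/column numbering with the natural $0,\dots,B$ labelling of battery levels, and making sure the empty-product conventions ($j=l$ and $j=l+1$) are handled without off-by-one errors. Once the recursion $\Lambda^{m+1}_{j,l}=(1-\lambda)\Lambda^m_{j,l}+\lambda\Lambda^m_{j-1,l}$ is established — which is where the sparsity of $\boldsymbol{\Lambda}$ does all the work — the algebra is just Pascal's identity in disguise, and the last-row formula follows for free from column-stochasticity at every power. I would close by remarking that this closed form is exactly what is needed to describe the entries of $\boldsymbol{\Lambda}^m\beta$ and $\boldsymbol{\Lambda}^m\rho^j$ appearing in Table~\ref{tab:belief_set}, enabling the truncation of the belief space.
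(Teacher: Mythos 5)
Your proof is correct and takes essentially the same route as the paper: induction on $m$ using the bidiagonal structure of $\boldsymbol{\Lambda}$, with the last row handled via column-stochasticity. The only cosmetic difference is that you make the entrywise recursion $\Lambda^{m+1}_{j,l}=(1-\lambda)\Lambda^{m}_{j,l}+\lambda\Lambda^{m}_{j-1,l}$ and the underlying Pascal identity explicit, whereas the paper simply displays the full matrix product $\boldsymbol{\Lambda}^{m}\boldsymbol{\Lambda}$ in the inductive step.
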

	\begin{proof}
		See Appendix~\ref{sec_appendix_lemm_struct_LAMBDA}.
	\end{proof}
	
	Thus, by \eqref{eq_Lambda_power_m}, for any energy arrival rate ${0 < \lambda \leq 1}$, we have
	\begin{equation}
		\lim_{m \rightarrow \infty} \Lambda_{j,l}^m = \left \lbrace
		\begin{array}{ll}
			0,   & j \leq B, \forall l,\\
			1,  & j = B + 1, \forall l,
		\end{array}
		\right.
	\end{equation}
	{which states that when ${m \rightarrow \infty}$, matrix $\boldsymbol{\Lambda}^{m}$ tends to a matrix with all entries zero except that the entries at its last row are all ones. Consequently, $\lim_{m \rightarrow \infty}\boldsymbol{\Lambda}^m\beta \rightarrow (0,0,\dots,0,1)\tran$}, $\forall \beta$, and, for a sufficiently large integer $M$, we have {$\boldsymbol{\Lambda}^{M} \approx \boldsymbol{\Lambda}^{M+1}$}. Thus, we construct a truncated belief space $\hat{\mathcal{B}}$ of finite dimension $|\hat{\mathcal{B}}| = (B+1)(M+1)$, as shown in Table \ref{tab:belief_set}. Intuitively, the value $M$ represents the maximum number of consecutive no-command actions ($a = 0$) for which the belief is updated. This means that from the $(M+1)$th no-command onward, the belief is no longer updated.
	{This is reasonable because after $M$ consecutive $a = 0$ actions, the battery is almost full,
		i.e., $\boldsymbol{\Lambda}^M \beta \approx (0,0,\dots,0,1)\tran$,~$\forall \beta \in \mathcal{B}$, and thus, for sufficiently large $M$, the space $\hat{\mathcal{B}}$ covers (almost) all the possible beliefs. In other words, $\hat{\mathcal{B}}$ includes all possible beliefs 
		for any number of time steps with maximum $M$ consecutive zero actions.}
	{The convergence rate of the power of the matrix $\boldsymbol{\Lambda}$ (i.e., $\boldsymbol{\Lambda}^m$) is determined by the second largest eigenvalue modulus (SLEM) of $\boldsymbol{\Lambda}$.
		Specifically, the smaller the SLEM, the faster the  $\boldsymbol{\Lambda}^m$ converges \cite[Section~1.1.2]{boyd2004fastest}.  
		The eigenvalues of $\boldsymbol{\Lambda}$ are $1$ and $1-\lambda$, and hence,  $\boldsymbol{\Lambda}^m$ converges faster
		as $\lambda$ increases.}

	\begin{algorithm}[t]
		\caption{Proposed algorithm to obtain an optimal policy $\pi^*$
		}\label{alg_value_itr_pomdp}
		\begin{algorithmic}[1]
			\STATE \textbf{Initialize} $V(z) = h(z) =0$,$\forall z = (\beta,r,\Delta)$, $\beta\in \hat{\mathcal{B}}$, $r \in  \{0,1\}$, $\Delta \in \{1,\dots,\Delta^{\mathrm{max}}\}$, determine an arbitrary $z_{\textrm{ref}} \in \mathcal{Z}$ and a small threshold $\theta > 0$
			\REPEAT
			\FOR{$z$}
			\STATE calculate $Q(z,0)$ and $Q(z,1)$ by using \eqref{eq_bellman_q_pomdp}
			\STATE $V_{\mathrm{tmp}}(z) \leftarrow \min_{a \in \mathcal{A}} Q(z,a)$
			\ENDFOR
			\STATE $\delta \leftarrow \max_{z} (V_{\textrm{tmp}}(z) - V(z)) - \min_{z} (V_{\textrm{tmp}}(z) - V(z))$
			\STATE $V(z) \leftarrow V_{\textrm{tmp}}(z)$,  for all $z$ 
			\STATE $h(z) \leftarrow V(z) - V(z_{\mathrm{ref}})$,  for all $z$
			\UNTIL{$\delta < \theta$}
			\STATE $\pi^*(z) = \argmin_{a \in \mathcal{A}} Q(z,a)$, for all $z$
		\end{algorithmic}
	\end{algorithm}

	Next, we provide a theorem which is used to reduce the size of the belief-state space, thereby leading {to a reduced} complexity of the proposed algorithm.
	
	\begin{theorem}\label{theorem_structure_v_part1}
		Function ${V(z)}$ associated with a belief-state ${z=(\beta,r,\Delta,\tilde{b})}$ does not depend on partial battery knowledge $\tilde{b}$.
	\end{theorem}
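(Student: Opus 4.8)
The plan is to prove the claim by induction along the relative value iteration \eqref{eq_v_itr}, exploiting that the partial battery knowledge $\tilde b$ never enters the immediate cost and is only \emph{carried forward} (never \emph{read}) by the dynamics. Concretely, I will show that for every iteration $i = 0,1,2,\ldots$ the iterate $V^{(i)}(z)$ — and hence $h^{(i)}(z) = V^{(i)}(z) - V^{(i)}(z_{\mathrm{ref}})$, which differs from $V^{(i)}$ by a constant — is a function of $(\beta,r,\Delta)$ only, i.e., it does not depend on $\tilde b$. Passing to the limit $i\to\infty$ and invoking the convergence $V^{(i)}\to V$ from \cite[Section~8.5.5]{puterman2014markov} then yields the statement for $V$.

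The base case is immediate: $V^{(0)}(z) = 0$ for all $z$ by the initialization, which trivially does not depend on $\tilde b$. For the inductive step, assume $V^{(i-1)}$ (equivalently $h^{(i-1)}$) does not depend on $\tilde b$, and inspect the two action-value functions in \eqref{eq_bellman_q_pomdp}. In $Q^{(i)}(z,0)$ the immediate cost $r\min\{\Delta+1,\Delta^{\mathrm{max}}\}$ contains no $\tilde b$, and the only occurrence of $\tilde b$ on the right-hand side is as the fourth argument of $h^{(i-1)}(\boldsymbol{\Lambda}\beta,r^\prime,\min\{\Delta+1,\Delta^{\mathrm{max}}\},\tilde b)$, i.e., $\tilde b$ is merely copied into the next belief-state, where the induction hypothesis makes it irrelevant. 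In $Q^{(i)}(z,1)$ the immediate cost $r[\beta_0\min\{\Delta+1,\Delta^{\mathrm{max}}\}+(1-\beta_0)]$ again has no $\tilde b$; the term $h^{(i-1)}({\rho}^{0},r^\prime,\min\{\Delta+1,\Delta^{\mathrm{max}}\},\tilde b)$ (the branch in which the sensor is commanded but its battery turned out empty, so no update is received) carries $\tilde b$ forward and is handled by the hypothesis; and the remaining terms $h^{(i-1)}({\rho}^{j},1,1,j)$ and $h^{(i-1)}({\rho}^{j},0,1,j)$, $j=1,\ldots,B$, have their fourth argument set to $j$ (the battery level reported in the received update), which does not depend on the current $\tilde b$. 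Hence both $Q^{(i)}(z,0)$ and $Q^{(i)}(z,1)$ are independent of $\tilde b$, so $V^{(i)}(z) = \min_{a\in\mathcal{A}}Q^{(i)}(z,a)$ and $h^{(i)}(z) = V^{(i)}(z) - V^{(i)}(z_{\mathrm{ref}})$ are independent of $\tilde b$, completing the induction.

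This is essentially a bookkeeping induction, and the only point requiring care is the exhaustive check that \emph{every} appearance of $\tilde b$ on the right-hand side of \eqref{eq_bellman_q_pomdp} falls into one of two harmless categories: either it is carried forward into the fourth slot of $h^{(i-1)}$, or it is overwritten by the reported level $j$. This is exactly what the state-transition structure \eqref{eq_stp} guarantees, since the $\tilde b$-component evolves by $\tilde b(t+1)=\tilde b(t)$ when no update is received and by $\tilde b(t+1)=b(t)$ when one is, and it never influences the evolution of $b$, $r$, or $\Delta$, nor the cost $c(s,a)$. An equivalent way to phrase the conclusion is that the belief-MDP restricted to the reduced states $(\beta,r,\Delta)$ is itself a well-posed average-cost MDP with the same optimal relative value function, so $\tilde b$ may be dropped from the belief-state when running Algorithm~\ref{alg_value_itr_pomdp}, which is precisely the complexity reduction announced before the theorem.
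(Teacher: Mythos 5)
Your proof is correct and follows essentially the same route as the paper's: an induction over the relative value iteration iterates, using the convergence $V^{(i)}\to V$ and the fact that $\tilde b$ never enters the immediate cost and only appears in the fourth argument of $h^{(i-1)}$ (or is overwritten by the reported level $j$), so independence of $\tilde b$ propagates from one iterate to the next. The only difference is cosmetic — the paper fixes two belief-states differing only in $\tilde b$ and shows their values coincide, leaving the $a=1$ branch to the reader, whereas you phrase the hypothesis as $\tilde b$-independence of the whole iterate and check both actions explicitly — which, if anything, makes the inductive hypothesis cleaner.
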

	\begin{proof}
		See Appendix~\ref{sec_appendix_theorem_structure_v_part1}.
		{Intuitively, the belief $\beta$ must capture all
			the {relevant}
			information in $\tilde{b}$ regarding searching for an optimal policy, and {consequently}, there is not any extra information in $\tilde{b}$ given $\beta$.}
	\end{proof}

	
	{According to} Theorem~\ref{theorem_structure_v_part1}, $V(z)$, where $z = (\beta,r, \Delta, \tilde{b})$, {and consequently $h(z)$ and $Q(z,a)$}, do not depend on $\tilde{b}$. Thus, $\tilde{b}$ does not have any impact on calculating $\pi^*$ in \eqref{eq_optimal_policy}. Therefore, we remove\footnote{Note that while $\tilde{b}$ is removed from the belief-state, it is still needed to calculate the belief $\beta(t)$. 
	} $\tilde{b}$ from the belief-state $z$ and redefine the belief-state {$z$ (and also the belief-state space $\mathcal{Z}$)} hereinafter as $z = (\beta,r, \Delta) \in \mathcal{Z}$.
	We note that \eqref{eq_bellman_q_pomdp} can easily be rewritten based on the new belief-state definition {by dropping the last entry in $h$.}

	Finally, considering the truncated belief space $\hat{\mathcal{B}}$, we use \eqref{eq_bellman_q_pomdp}--\eqref{eq_v_itr} to find $V(z)$, $h(z)$, $Q(z,a)$, and consequently an optimal policy $\pi^*$ iteratively, as presented in Algorithm~\ref{alg_value_itr_pomdp}. 

	\subsection{Efficient Algorithm Implementation using the Sparsity of Transition Matrices}
	We assign an index $z = 1,2, \dots, |\mathcal{Z}|$ to each belief-state $z = (\beta,r,\Delta)$, $\beta \in \hat{\mathcal{B}}$, $r \in \{0,1\}$, $\Delta \in \{1,2,\cdots,\Delta^{\mathrm{max}}\}$. 
	{We define the cost vector associated with action $a \in \{0,1\}$ as $\mathbf{c}^a \triangleq (c(1,a),c(2,a),\dots,c(|\mathcal{Z}|,a))^\mathrm{T}$, where $c(z,a) \triangleq \sum_{b = 0}^B \beta_b c(s = ( b ,r,\Delta),a)$ is the immediate cost of taking action $a$ in belief-state $z$. We also define the belief-state transition matrix associated with action $a$ as $\mathbf{P}^a$, where $P^a_{j,l} \triangleq \Pr(z^\prime = l \mid z = j, a)$ is the entry of $\mathbf{P}^a$ at the $j$th row and $l$th column.} 
	Therefore, \eqref{eq_v_itr} can be written in the vector form as  
	\begin{equation}\label{vector-form-Bellman}
		\begin{aligned}
			& \mathbf{v}^{(i)} = \min_{a}\left[ \mathbf{c}^a + \mathbf{P}^a\mathbf{h}^{(i-1)}\right]
			\\& \mathbf{h}^{(i)} = \mathbf{v}^{(i)} - V^{(i)}(z_{\mathrm{ref}})\mathbf{1}, 
		\end{aligned}
	\end{equation}
	where $\mathbf{v}^{(i)} \triangleq (V^{(i)}(1),V^{(i)}(2),\dots,V^{(i)}(|\mathcal{Z}|))^\mathrm{T}$ and $\mathbf{h}^{(i)} \triangleq (h^{(i)}(1),h^{(i)}(2),\dots,h^{(i)}(|\mathcal{Z}|))^\mathrm{T}$ are column vectors, and $\mathbf{1}$ is a column vector with all entries $1$. Note that the operator $\min$
	denotes the element-wise minimum operation on two vectors.
	
	{The vector form \eqref{vector-form-Bellman} allows a more efficient implementation of Algorithm~\ref{alg_value_itr_pomdp}, as it eliminates the need for the for-loop over all belief-states in each iteration.} 
	The resulting algorithm is presented in Algorithm~\ref{alg_value_itr_pomdp_vector_form}, where the span of a vector $\mathbf{v}$ is defined as $\mathrm{sp}(\mathbf{v}) \triangleq \max_{z} V(z) - \min_{z} V(z)$, and an optimal policy vector is defined as $\boldsymbol{\pi}^* \triangleq (\pi^*(1),\dots, \pi^*(|\mathcal{Z}|))$.
	Note that the transition matrices $\mathbf{P}^{a}$, $a \in \{0,1\}$, are sparse, and this property is used to efficiently compute the matrix-vector products\footnote{{In particular, MATLAB, Python NumPy, Intel MKL, GNU Octave, and Julia have optimized routines for sparse matrix-vector multiplication that can handle large sparse matrices efficiently.}} (e.g., $\mathbf{P}^{a} \mathbf{h}^{(i)}$) using sparse matrix-vector multiplication methods. 
	\begin{Pro}\label{theorem-sparsity-transition-matrix}
		Denoting the number of nonzero elements in the sparse matrix $\mathbf{P}^{a}$ by  $\mathrm{nz}(\mathbf{P}^{a})$, we have $\mathrm{nz}(\mathbf{P}^{0}) = 2|\mathcal{Z}|$ and $\mathrm{nz}(\mathbf{P}^{1}) = 2(B+1)|\mathcal{Z}|$.
	\end{Pro}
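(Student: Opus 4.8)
\emph{Proof plan.} The plan is to prove both identities by a direct row-by-row counting argument. Since $\mathbf{P}^{a}$ has $|\mathcal{Z}|$ rows and
$\mathrm{nz}(\mathbf{P}^{a}) = \sum_{z \in \mathcal{Z}} \big|\{\,z' \in \mathcal{Z} : P^{a}_{z,z'} \neq 0\,\}\big|$,
it suffices to show that, for a generic belief-state $z=(\beta,r,\Delta)$, the row $\mathbf{P}^{0}_{z,\cdot}$ has exactly $2$ nonzero entries and the row $\mathbf{P}^{1}_{z,\cdot}$ has exactly $2(B+1)$ nonzero entries; summing over the $|\mathcal{Z}|$ rows then gives the claim. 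First I would recall that, after dropping $\tilde b$, the one-step transition factorizes (as in \eqref{eq_stp}) into an independent request part and a belief/AoI part, $\Pr(z'\mid z,a)=\Pr(r')\cdot\Pr(\beta',\Delta'\mid\beta,\Delta,a)$ with $\Pr(r')\in\{p,1-p\}$. For $0<p<1$ both values of $r'$ carry positive probability, so the number of reachable next states from $z$ equals $2$ times the number of distinct pairs $(\beta',\Delta')$ that are reachable under action $a$.

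For $a=0$: by Theorem~\ref{lemma_beleief_update} the next belief is $\beta'=\boldsymbol{\Lambda}\beta$ (deterministic, and still in $\hat{\mathcal B}$ by construction of Table~\ref{tab:belief_set}), and the AoI recursion forces $\Delta'=\min\{\Delta+1,\Delta^{\mathrm{max}}\}$ (also deterministic). Hence there is exactly one reachable $(\beta',\Delta')$, each row of $\mathbf{P}^{0}$ has exactly two nonzeros, and $\mathrm{nz}(\mathbf{P}^{0})=2|\mathcal{Z}|$. For $a=1$: Theorem~\ref{lemma_beleief_update} lists $B+1$ outcome branches. With probability $\beta_{0}$ (the event ``$b(t)=0$'', no update received, $\Delta(t+1)>1$) the next belief is $\rho^{0}$ and $\Delta'=\min\{\Delta+1,\Delta^{\mathrm{max}}\}\ge 2$; with probability $\beta_{j}$, $j\in\{1,\dots,B\}$ (an update is received reporting level $j$, $\Delta(t+1)=1$) the next belief is $\rho^{j}$ and $\Delta'=1$. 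Although $\rho^{0}=\rho^{1}$, the associated AoI coordinates differ ($\ge 2$ versus $1$, using $\Delta\ge 1$), and $\rho^{2},\dots,\rho^{B}$ are pairwise distinct; hence the pairs $(\rho^{0},\min\{\Delta+1,\Delta^{\mathrm{max}}\}),(\rho^{1},1),(\rho^{2},1),\dots,(\rho^{B},1)$ are $B+1$ distinct points in $\hat{\mathcal B}\times\{1,\dots,\Delta^{\mathrm{max}}\}$. Multiplying by the two choices of $r'$, each row of $\mathbf{P}^{1}$ has $2(B+1)$ nonzeros, so $\mathrm{nz}(\mathbf{P}^{1})=2(B+1)|\mathcal{Z}|$.

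The step I expect to require the most care is the bookkeeping for $\mathbf{P}^{1}$: one must (i) correctly collapse the ``no update'' branch onto $\rho^{0}=\rho^{1}$ while keeping it separate from the ``update received with $j=1$'' branch via the AoI coordinate, and (ii) be explicit that this is the count of the structural (full-support) nonzero pattern. Indeed, for belief-states whose belief vector has zero entries (e.g.\ $\beta=\rho^{j}$), some weights $\beta_{j}$ vanish and the genuine row count drops below $2(B+1)$; thus $2(B+1)|\mathcal{Z}|$ is the worst-case size of the sparsity pattern allocated for $\mathbf{P}^{1}$, which is precisely the quantity governing the per-iteration cost of the sparse matrix-vector products $\mathbf{P}^{a}\mathbf{h}^{(i-1)}$ in \eqref{vector-form-Bellman}. (If one instead assumes the initial belief $\beta$ has full support and $0<\lambda<1$, then $\boldsymbol{\Lambda}^{m}\beta$ has full support for every finite $m$, so every row of $\mathbf{P}^{1}$ indexed by a top-row belief of Table~\ref{tab:belief_set} attains exactly $2(B+1)$; in either reading the stated values are the relevant nonzero counts.)
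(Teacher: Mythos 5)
Your proof is correct and follows essentially the same route as the paper: factor $\Pr(z'\mid z,a)$ into the request part $\Pr(r')$ and the (deterministic given $b$) belief/AoI part, then count reachable next states row by row --- one pair $(\beta',\Delta')$ for $a=0$ and $B+1$ distinct pairs $\{(\rho^{0},\min\{\Delta+1,\Delta^{\mathrm{max}}\}),(\rho^{j},1)_{j=1}^{B}\}$ for $a=1$, each doubled by $r'\in\{0,1\}$. Your added remark that rows indexed by beliefs with zero entries (e.g., $\beta=\rho^{j}$) have strictly fewer genuine nonzeros, so that $2(B+1)|\mathcal{Z}|$ should be read as the structural (worst-case) sparsity pattern governing the cost of the sparse products in \eqref{vector-form-Bellman}, is a legitimate refinement that the paper's proof leaves implicit.
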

	\begin{proof}
		See Appendix~\ref{sec-apndix-theorem-sparsity-transition-matrix}. {The sparsity structures will also be specified in the proof.}
	\end{proof}
	{The computational complexity of sparse matrix-vector multiplication is proportional to the number of nonzero elements in the matrix \cite[Appendix~C]{boyd2004convex}. Thus, by Proposition~\ref{theorem-sparsity-transition-matrix}, the computational complexity for each iteration of Algorithm~\ref{alg_value_itr_pomdp_vector_form} is ${O}(B|\mathcal{Z}|) = {O}(MB^2\Delta^{\mathrm{max}})$.}
	
	\begin{algorithm}[t!]
		\caption{Vector-based implementation of the proposed algorithm}\label{alg_value_itr_pomdp_vector_form}
		\begin{algorithmic}
			\STATE \textbf{Step 0.} {Initialize} $\mathbf{v}^{(0)} = \mathbf{h}^{(0)} = (0,\dots,0)$, set $i = 1$, and determine an arbitrary $z_{\textrm{ref}} \in \mathcal{Z}$, a small threshold $\theta > 0$
			\STATE \textbf{Step 1.} {Set}
			\begin{equation}\notag
				\begin{aligned}
					& \mathbf{v}^{(i)} = \min_{a}\left[ \mathbf{c}^a + \mathbf{P}^a\boldsymbol{h}^{(i-1)}\right]
					\\& \mathbf{h}^{(i)} = \mathbf{v}^{(i)} - V(z_{\mathrm{ref}})^{(i)}\mathbf{1}
				\end{aligned}
			\end{equation}
			\STATE \textbf{Step 2.} {If $\mathrm{sp}(\mathbf{v}^{(i)} - \mathbf{v}^{(i-1)}) < \theta$ , go to step 3; otherwise, $i \gets i+ 1$ and  go to step 1}
			\STATE \textbf{Step 3.} Compute optimal policy vector {$\boldsymbol{\pi}^* = \argmin_{a}\left[ \mathbf{c}^a + \mathbf{P}^a\mathbf{h}^{(i)}\right]$}
		\end{algorithmic}
	\end{algorithm}

	\subsection{Maximum Likelihood Estimator (MLE): a Sub-optimal MDP-Based Algorithm}\label{sec-suboptimal-MLE}
	Here, we propose a sub-optimal policy which has lower computational complexity than the optimal POMDP-based policy.
	{Assuming that we track the belief, a sub-optimal strategy is to act as if we were in the most likely state. Namely, the battery level with the highest probability mass is considered to be the battery level that the sensor is most likely to be in.}
	With this idea, we first consider the case where the edge node knows the exact battery levels at each time slot. In this case, an optimal policy, denoted by $\pi_{\mathrm{Exact}}^*(s)$, {$\forall s = (b,r,\Delta) \in \mathcal{S}$}, can be found {using relative value iteration algorithm (RVIA)} as shown in \cite{hatami2021spawc}.
	Then, we introduce the following sub-optimal policy for the case where the edge node does not know the exact battery level at each time slot. This sub-optimal policy is denoted by $\pi_{\mathrm{MLE}}$ and obtained by taking the learned $\pi_{\mathrm{Exact}}^*$, but then evaluating the policy by replacing the exact battery level $b$ with $b^* \triangleq \argmax_{j = 0,1,\dots,B} {\beta}_{j}$, i.e., $\pi_{\mathrm{MLE}}(z) =  \pi_{\mathrm{Exact}}^*(s^*)$, $s^* = (b^*,r,\Delta))$, $\forall z = (\beta,r,\Delta)$.
	Generally, the computational complexity for each iteration of the RVIA that finds $\pi_{\mathrm{Exact}}^*$ is  $O(|\mathcal{S}|) = {O}(B\Delta^{\mathrm{max}})$, whereas the computational complexity for each iteration of Algorithm~\ref{alg_value_itr_pomdp_vector_form} is ${O}(B|\mathcal{Z}|) = {O}(MB|\mathcal{S}|) = {O}(MB^2\Delta^{\mathrm{max}})$.

	\section{Multi-Sensor IoT Network: a Relax-then-Truncate Approach}\label{sec_multisensor-limited_bw}

	

	We extend the status update system to a multi-sensor IoT network under a transmission constraint. 
	We denote the set of sensors by $\mathcal{K} = \{1,2,\dots,K\}$, where $K$ is the number of sensors.
	Similarly to Section~\ref{sec_network}, we define different quantities associated with sensor $k$, e.g., the action of the edge node associated with sensor $k$ is denoted by $a_k(t) \in \{0,1\}$, $k \in \mathcal{K}$.
	We consider that, due to transmission limitations,  no more than $N$ sensors may send a status update packet to the edge node at each time slot. Thus, we have the following per-slot constraint
	\begin{equation}\label{eq_bw_st}
		\sum_{k = 1}^{K}a_k(t) \leq N, \forall t.
	\end{equation}
	
	For the case where $N \geq K$, the edge node can command any number of sensors at each slot, which implies the actions $a_k(t)$, $k\in\mathcal{K}$, are independent across $k$, and thus, the problem of finding an optimal policy reduces to finding per-sensor optimal policies individually.
	For $N < K$, we can model the problem as a POMDP and derive an optimal policy $\pi^*$ using a similar method as in Section~\ref{sec_single_sensor}. Particularly, the belief-state at slot $t$ is expressed as $\mathbf{z}(t) = (z_1(t),\dots,z_K(t))$, where the per-sensor belief-state $z_k(t)$, $k = 1,\dots,K$, were defined in Section~\ref{sec-belief-def}. 
	The edge node's action at slot $t$ is defined by a $K$-tuple $\mathbf{a}(t) = \big(a_1(t), \dots, a_K(t)\big) \in \mathcal{A}$, where the action space is $\mathcal{A}=\big\{(a_1,\ldots,a_K) \mid a_k\in \{0,1\}
	,\;\sum_{k=1}^{K}a_k\le{N}\big\}$. 
	It is worth noting that the belief-state space and action space grow exponentially with the number of sensors $K$, resulting exponential growth in the computational complexity of finding an optimal policy. 
	{Thus, inspired by \cite[Section~IV]{hatami2022JointTcom}, we next propose an asymptotically optimal low-complexity algorithm, called \textit{Relax-then-Truncate}, for which the  complexity grows only linearly with $K$.}
	

	We begin by relaxing the per-slot  constraint \eqref{eq_bw_st} into a time average constraint and model the \textit{relaxed problem} as a \textit{constrained POMDP} (CPOMDP).
	Leveraging the Lagrangian approach \cite{altman1999constrained}, we convert the CPOMDP into an unconstrained problem.
	The resulting POMDP decouples along the sensors, allowing us to find optimal per-sensor policies for a fixed Lagrange multiplier using the method described in Section~\ref{sec-per-senspr-optimal-policy}. We then determine the optimal Lagrange multiplier by applying the bisection method.
	This procedure provides an optimal policy for the relaxed problem, denoted by $\pi_{\mathrm{R}}^*$ and referred to as \textit{optimal relaxed policy} hereinafter. 
	Note that $\pi_{\mathrm{R}}^*$ may not satisfy the per-slot constraint \eqref{eq_bw_st}.
	{To ensure that \eqref{eq_bw_st} is satisfied at each slot, we {use} an online truncation procedure.}
	Specifically, at each slot, if the number of sensors commanded under $\pi_{\mathrm{R}}^*$ is less than or equal to $N$, all of those sensors are commanded, and if it is greater than $N$, a (uniformly) random subset of $N$ sensors is selected to be commanded.
	We elaborate the details in \cite{hatami2022POMDP_multisensor}
	and hence are omitted here for brevity. 
	{Our optimality analysis in \cite{hatami2022POMDP_multisensor} shows that the proposed {relax-then-truncate} is \textit{asymptotically optimal} as the number of sensors goes to infinity.} 
	


	\begin{theorem}\label{theorem-asymptotically-opt}
		For any normalized transmission budget $\Gamma \triangleq \frac{N}{K} > 0$, 
		The relax-then-truncate policy $\tilde{\pi}$ is asymptotically optimal with respect to the number of sensors, i.e., ${\lim_{K\rightarrow\infty} (\bar{C}_{\tilde{\pi}} - \bar{C}_{\pi^*}) = 0}$.
	\end{theorem}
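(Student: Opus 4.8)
The plan is to establish asymptotic optimality by sandwiching the cost of the relax-then-truncate policy $\tilde{\pi}$ between two quantities that converge to the same limit. On one side, the optimal relaxed policy $\pi_{\mathrm{R}}^*$ solves a less constrained problem (time-average transmission constraint instead of per-slot), so its cost is a lower bound: $\bar{C}_{\pi_{\mathrm{R}}^*} \le \bar{C}_{\pi^*}$. On the other side, $\tilde{\pi}$ is a feasible policy for the original per-slot-constrained problem, so $\bar{C}_{\pi^*} \le \bar{C}_{\tilde{\pi}}$. Hence it suffices to show $\lim_{K\to\infty}(\bar{C}_{\tilde{\pi}} - \bar{C}_{\pi_{\mathrm{R}}^*}) = 0$, i.e. that the online truncation step asymptotically costs nothing.

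First I would set up the decoupled structure: under $\pi_{\mathrm{R}}^*$, by the Lagrangian argument the $K$ sensors evolve as independent belief-MDP chains, each following its own per-sensor optimal relaxed policy. Let $N_{\mathrm{R}}(t) = \sum_{k=1}^K a_k^{\mathrm{R}}(t)$ be the (random) number of sensors commanded at slot $t$ under $\pi_{\mathrm{R}}^*$; by construction of the relaxed problem, $\Expect[N_{\mathrm{R}}(t)] \le N = \Gamma K$ in steady state (with equality if the constraint is active). Truncation only differs from $\pi_{\mathrm{R}}^*$ on the event $\{N_{\mathrm{R}}(t) > N\}$, and on that event it fails to command at most $N_{\mathrm{R}}(t) - N$ sensors. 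The per-slot cost difference is therefore bounded by $\Delta^{\mathrm{max}}$ times the number of "dropped" commands, so the gap in average cost is bounded by $\Delta^{\mathrm{max}} \cdot \frac{1}{K}\Expect[(N_{\mathrm{R}}(t) - N)^+]$ in steady state (after normalization, since the cost $\bar C$ in \eqref{eq_average_cost_persensor} is a per-sensor average — I would make the normalization convention precise here).

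The key step is to show $\frac{1}{K}\Expect[(N_{\mathrm{R}}(t) - \Gamma K)^+] \to 0$. Since the $K$ per-sensor chains are independent and each indicator $a_k^{\mathrm{R}}(t) \in \{0,1\}$ has mean at most $\Gamma$ in stationarity, $N_{\mathrm{R}}(t)$ is a sum of independent Bernoulli-type random variables with mean $\le \Gamma K$. A concentration bound (Hoeffding, or a weak law of large numbers argument applied to the empirical fraction $N_{\mathrm{R}}(t)/K$) gives $\Prob\{N_{\mathrm{R}}(t) > \Gamma K + \epsilon K\} \le e^{-c\epsilon^2 K}$, and combined with the trivial bound $(N_{\mathrm{R}}(t) - \Gamma K)^+ \le K$ this yields $\frac{1}{K}\Expect[(N_{\mathrm{R}}(t)-\Gamma K)^+] = O(\epsilon) + K e^{-c\epsilon^2 K}$, which tends to $0$ as $K\to\infty$ (choosing, say, $\epsilon = K^{-1/3}$). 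A minor technical wrinkle is that the sensors need not be statistically identical, but since each is a finite-state uni-chain MDP the per-sensor stationary command probabilities are well-defined and bounded, and Hoeffding for non-identical independent bounded summands still applies.

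The main obstacle I anticipate is making the interchange of limits rigorous: the bound above is on the stationary (time-averaged) expectation, so I need to justify that (i) the truncated chain is itself uni-chain with a well-defined average cost, and (ii) the per-slot cost-difference bound transfers to the long-run average — this requires either a uniform-integrability / dominated-convergence argument (the per-slot cost is bounded by $\Delta^{\mathrm{max}}$, which helps) or invoking a renewal/ergodic argument that the fraction of slots spent in the "overflow" regime has the claimed vanishing expectation. I would also need to confirm that truncation (randomly dropping excess commands) does not destroy the uni-chain property; intuitively it cannot create new recurrent classes since it only delays updates. Once these ergodic-theoretic details are pinned down, the sandwich closes and $\lim_{K\to\infty}(\bar{C}_{\tilde{\pi}} - \bar{C}_{\pi^*}) = 0$ follows. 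The reference \cite{hatami2022POMDP_multisensor} presumably carries out exactly this concentration-plus-ergodicity argument, and I would mirror its structure while adapting the per-sensor chain to the belief-MDP of Section~\ref{sec_single_sensor}.
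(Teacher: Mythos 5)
The paper itself gives no in-text proof of Theorem~\ref{theorem-asymptotically-opt}: it defers entirely to \cite[Section~III-C]{hatami2022POMDP_multisensor}. Your overall strategy --- sandwich $\bar{C}_{\pi_{\mathrm{R}}^*}\le\bar{C}_{\pi^*}\le\bar{C}_{\tilde{\pi}}$, then show the truncation gap $\bar{C}_{\tilde{\pi}}-\bar{C}_{\pi_{\mathrm{R}}^*}$ vanishes via the Lagrangian decoupling, the per-slot expected-budget bound $\mathbb{E}\big[\sum_k a_k^{\mathrm{R}}(t)\big]\le \Gamma K$, and a law-of-large-numbers/concentration estimate on the normalized overflow --- is exactly the structure of that delegated argument (which itself mirrors \cite[Section~IV]{hatami2022JointTcom}), so in spirit you are on the intended path.

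There is, however, one concrete gap in your sketch beyond the ergodicity caveats you already flag. Your central estimate ``the per-slot cost difference is bounded by $\Delta^{\mathrm{max}}$ times the number of dropped commands, hence the average-cost gap is at most $\frac{\Delta^{\mathrm{max}}}{K}\,\mathbb{E}\big[(N_{\mathrm{R}}(t)-N)^+\big]$'' implicitly compares $\tilde{\pi}$ and $\pi_{\mathrm{R}}^*$ on the \emph{same} state trajectory. That is only valid up to the first truncation event: once a command is dropped, the affected sensor's AoI, battery level, and belief under $\tilde{\pi}$ diverge from those under $\pi_{\mathrm{R}}^*$, and so do all of its subsequent commands; the random variable $N_{\mathrm{R}}(t)$ you concentrate on is defined along the relaxed system's trajectory, while the overflow events that cost you money occur along the truncated system's trajectory, and these are not the same process. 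Closing this requires an explicit coupling or domination argument (e.g., showing that a dropped sensor's battery/belief under $\tilde{\pi}$ stochastically dominates its relaxed counterpart and that the induced extra cost per dropped command is still $O(\Delta^{\mathrm{max}})$ when accumulated over time), or else an analysis of the truncated chain in its own right in which the fraction of dropped sensors per slot is bounded by a quantity that can be tied back to the independent relaxed chains before applying Chebyshev/Hoeffding. This trajectory-divergence step is precisely the technical core that the companion proof has to supply, so as written your argument has a hole exactly where the main difficulty lies; the rest (feasibility of $\tilde{\pi}$, lower-bound property of the relaxed problem, boundedness of the per-slot cost by $\Delta^{\mathrm{max}}$, and the $K\to\infty$ concentration with $N=\Gamma K$) is sound.
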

	\begin{proof}
		The proof is presented in detail in \cite[Section~III-C]{hatami2022POMDP_multisensor}.
	\end{proof}
	
	
	
	
	\section{Simulation Results}\label{sec_simulation}
	In this section, we provide simulation results to demonstrate the performance of the proposed status update algorithms for both single-sensor and multi-sensor scenarios.
	
	\newcommand\figwidth{0.49}
	\begin{figure}[t]
		\centering
		\subfigure [$r = 0$]{%
			\includegraphics[width=\figwidth \columnwidth]{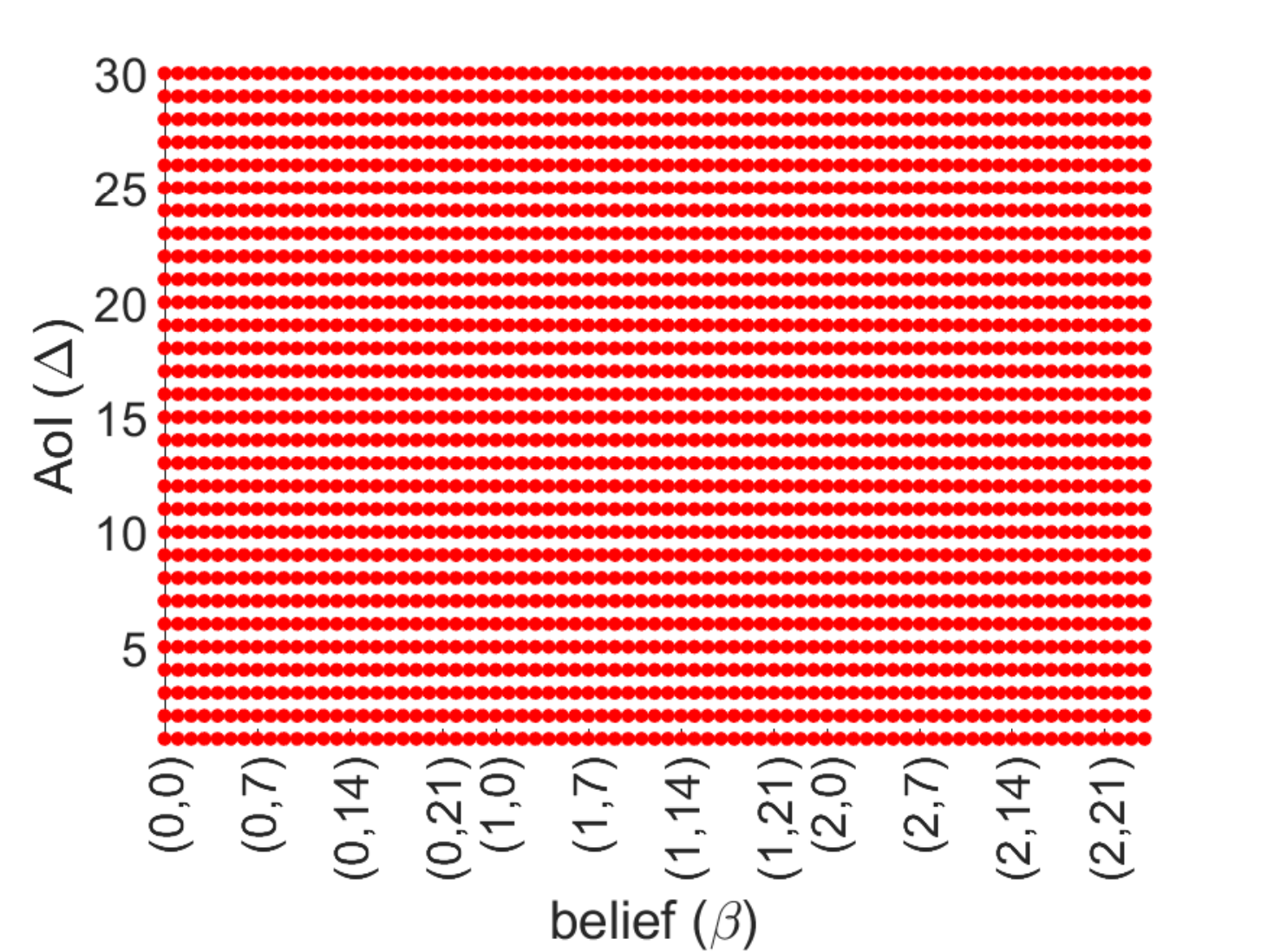}\label{fig_structure_a}
		}
		\subfigure [$r = 1$]{%
			\includegraphics[width=\figwidth \columnwidth]{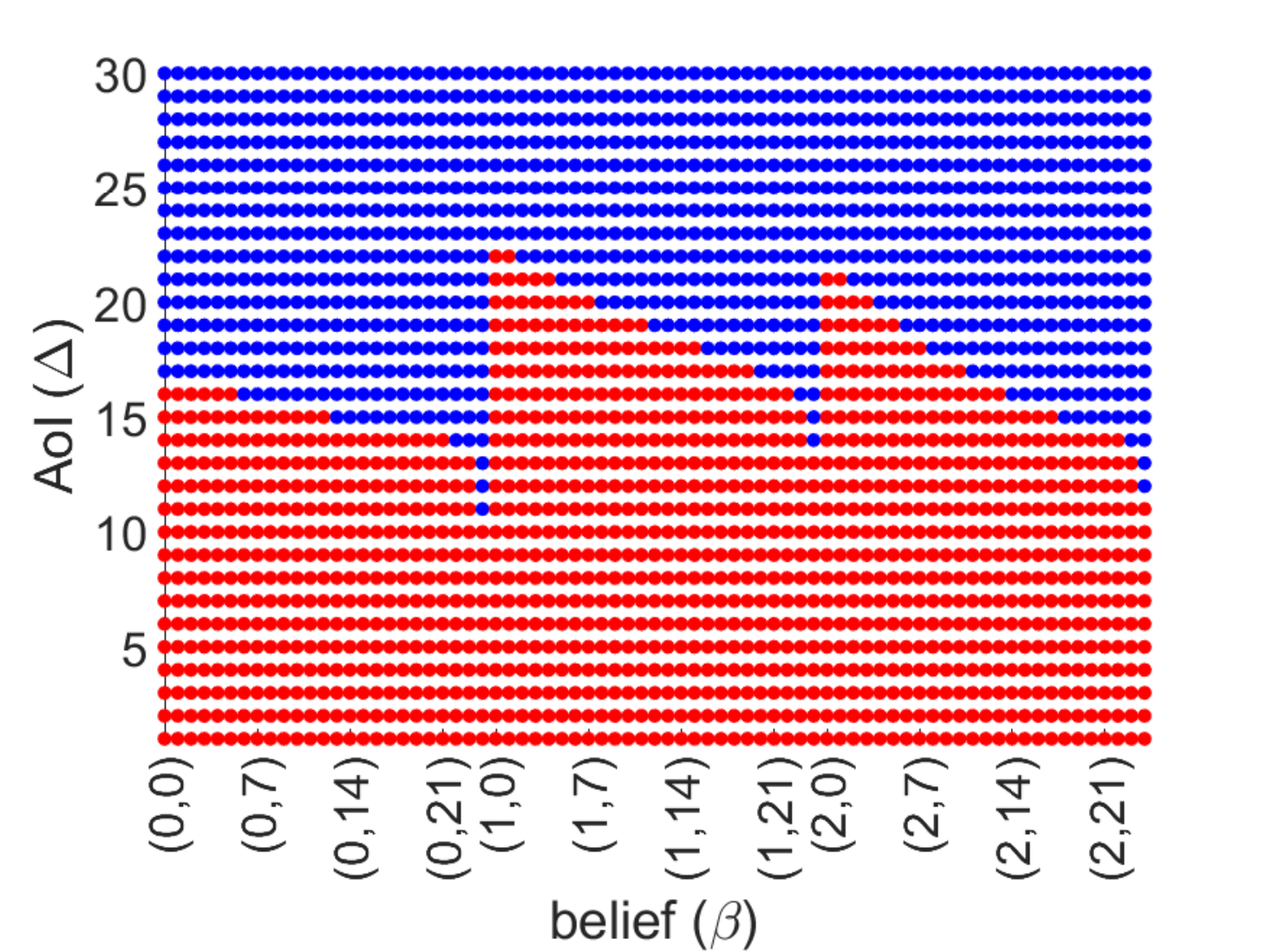}%
		}
		\caption{Structure of an optimal policy $\pi^*(z)$ for each belief-state $z = (\beta,r, {\Delta})$, where $p = 0.8$, $\lambda = 0.06$, and initial belief $\beta({0}) = (1/3,1/3,1/3)$.}\label{fig_structure}
		\subfigure [$p = 0.8$ and $\lambda = 0.12$]{%
			\includegraphics[width=\figwidth \columnwidth]{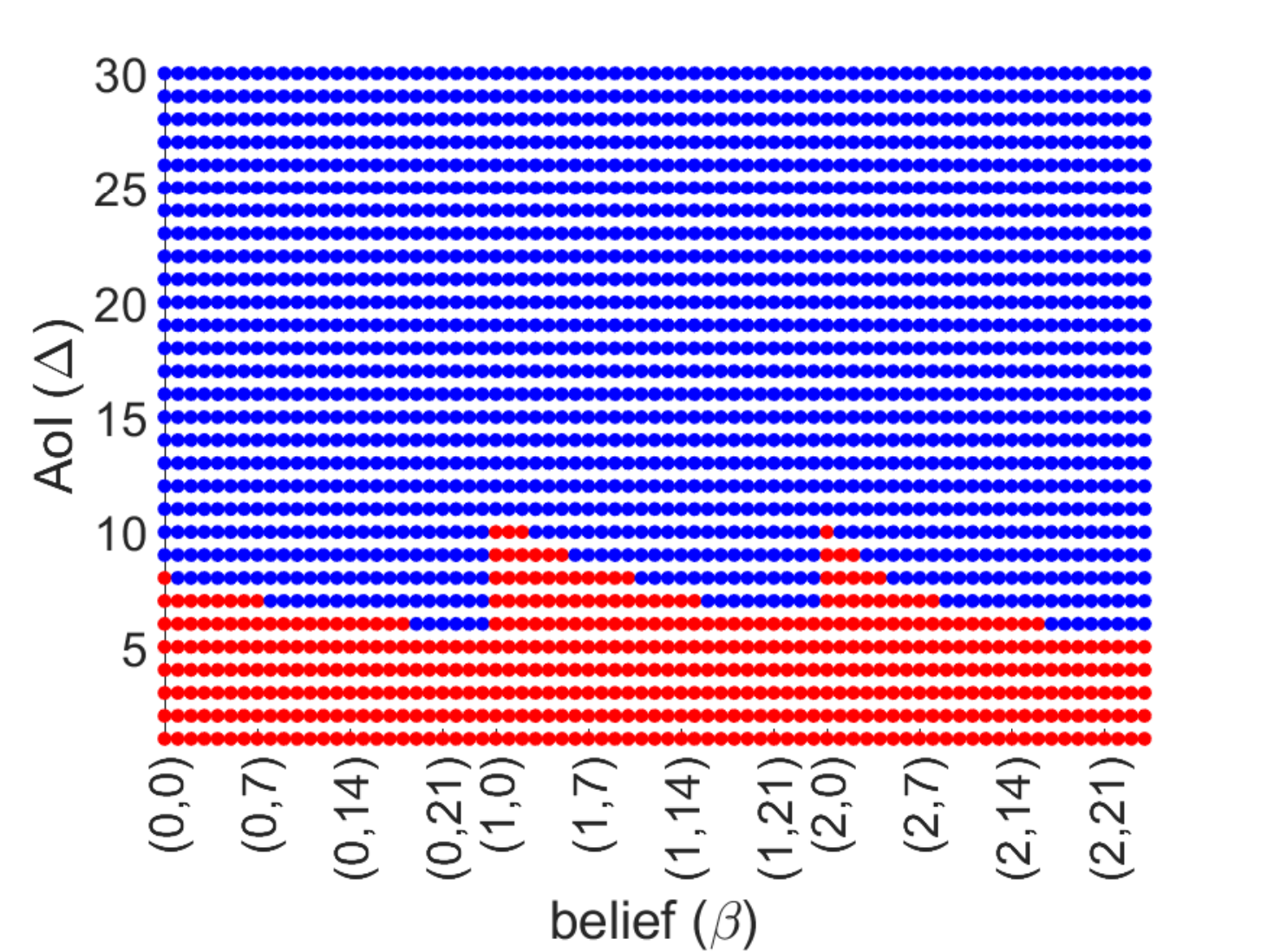}%
		}
		\subfigure [$p = 0.2$ and $\lambda = 0.06$]{%
			\includegraphics[width=\figwidth \columnwidth]{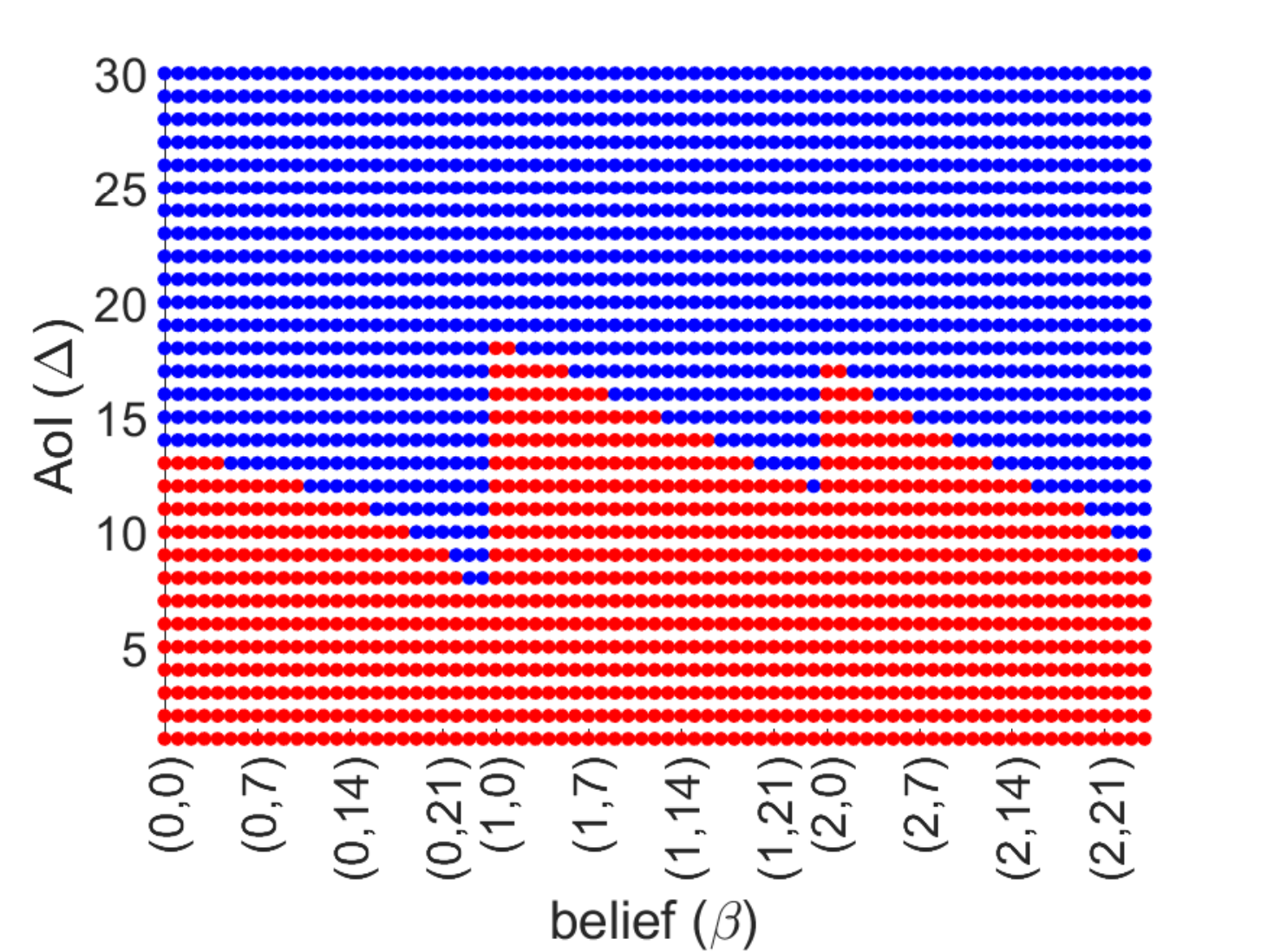}%
		}
		\caption{Structure of an optimal policy $\pi^*(z)$ for each belief-state $z = (\beta,1, {\Delta})$.}
		\label{fig_structure_compare}
	\end{figure}

	\newcommand\figPerfwidth{0.49}
	\begin{figure}[t]
		\centering
		\subfigure [$\lambda = 0.04$]{%
			\includegraphics[width=\figPerfwidth \columnwidth]{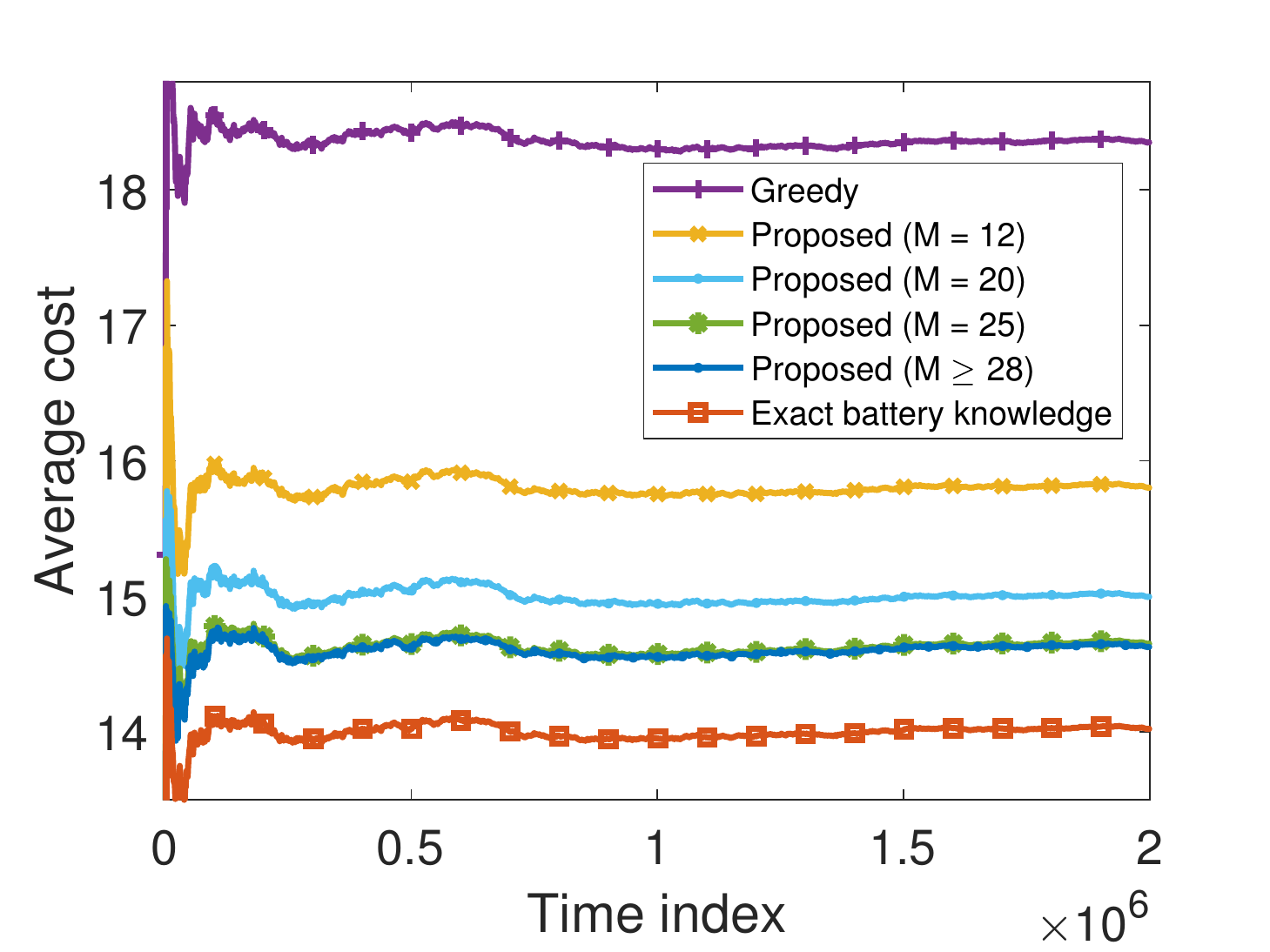}%
		}
		\subfigure [$\lambda = 0.08$]{%
			\includegraphics[width=\figPerfwidth \columnwidth]{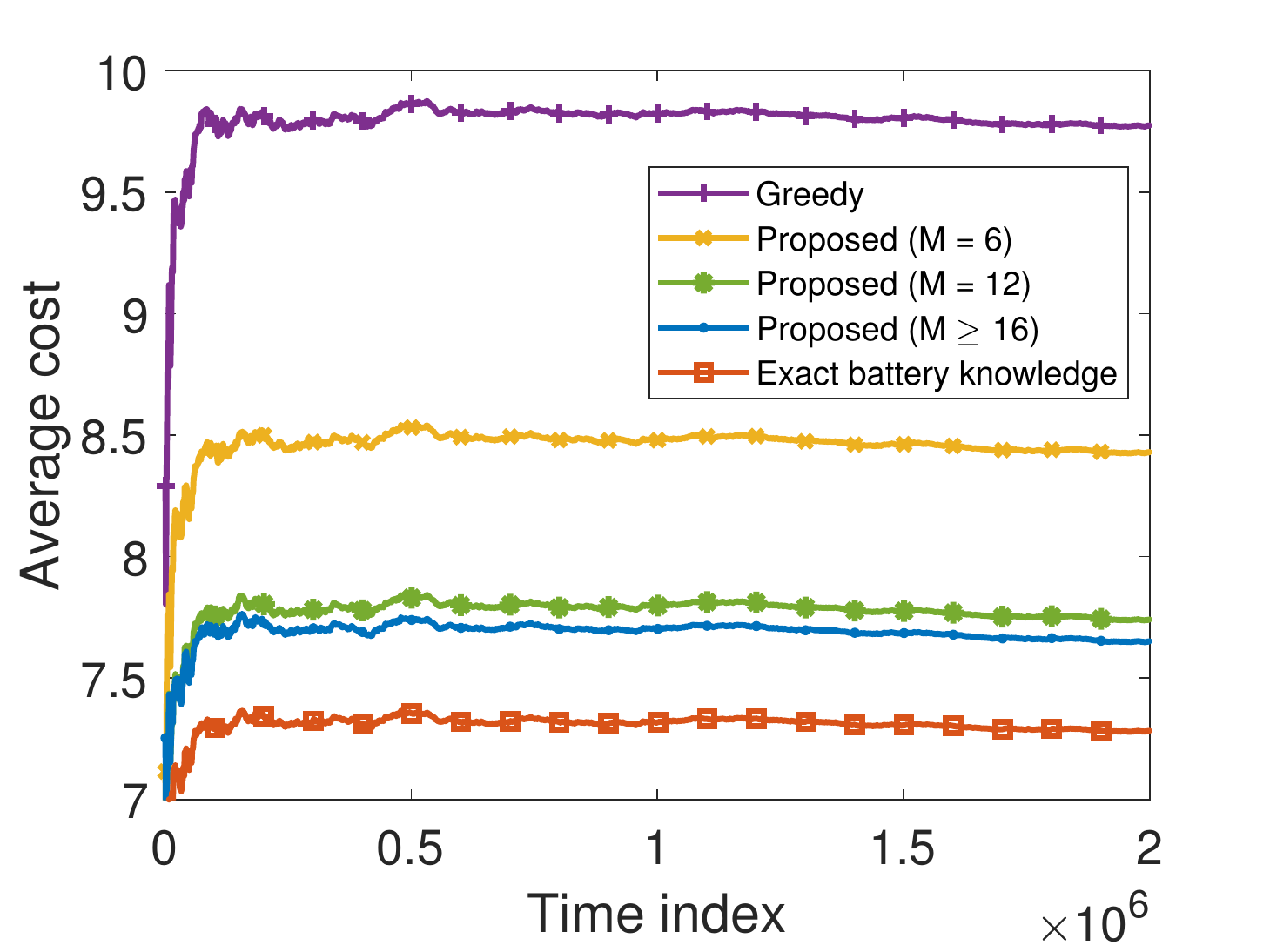}%
		}
		\caption{Performance of the proposed POMDP-based algorithm over time for single sensor scenario.}
		\label{fig_performance}
	\end{figure}
	
	\begin{figure}[t]
		\centering
		\subfigure [$p = 0.8$, $B = 2$]{%
			\includegraphics[width=\figPerfwidth \columnwidth]{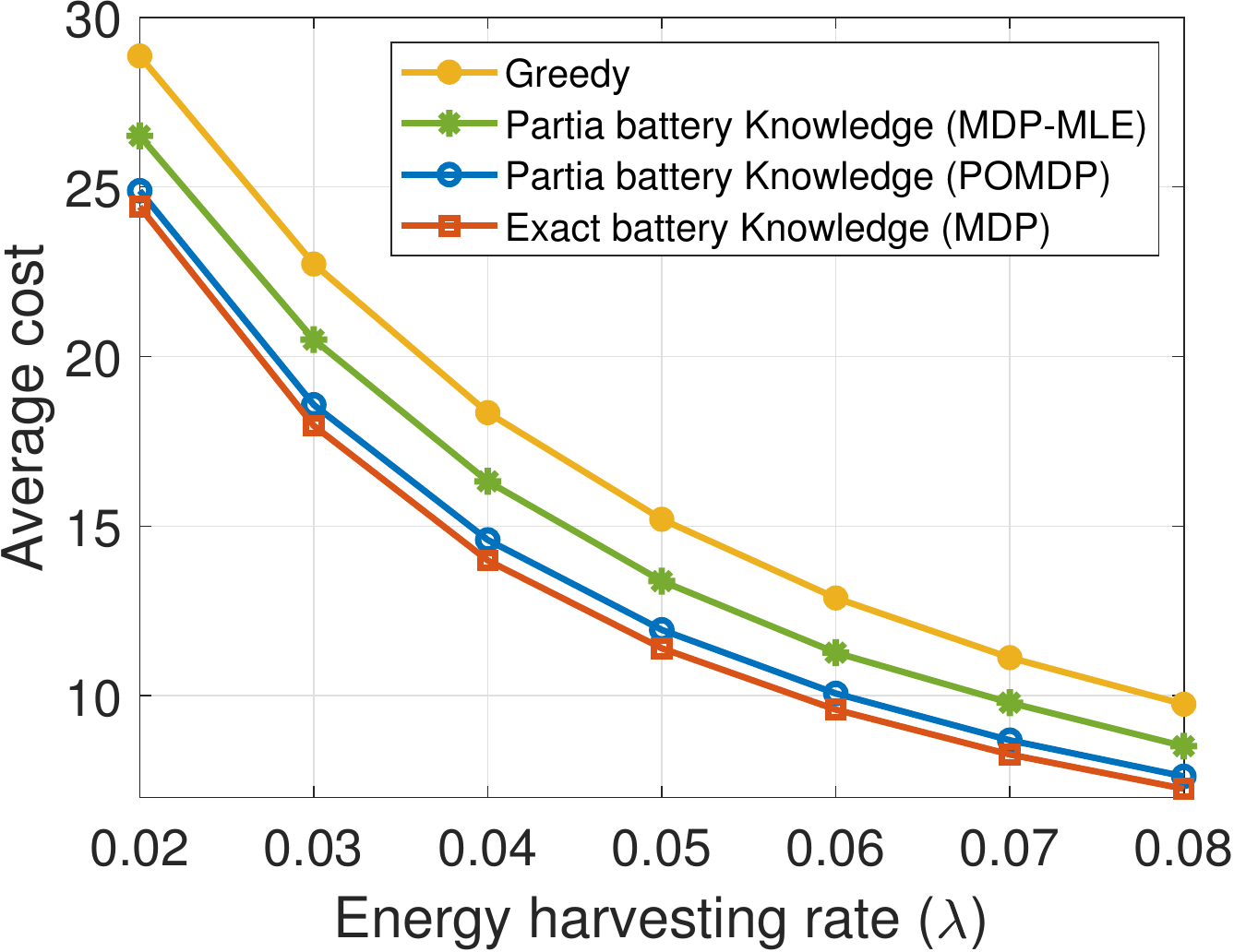}%
		}
		\subfigure [$p = 0.8$, $B = 3$]{%
			\includegraphics[width=\figPerfwidth \columnwidth]{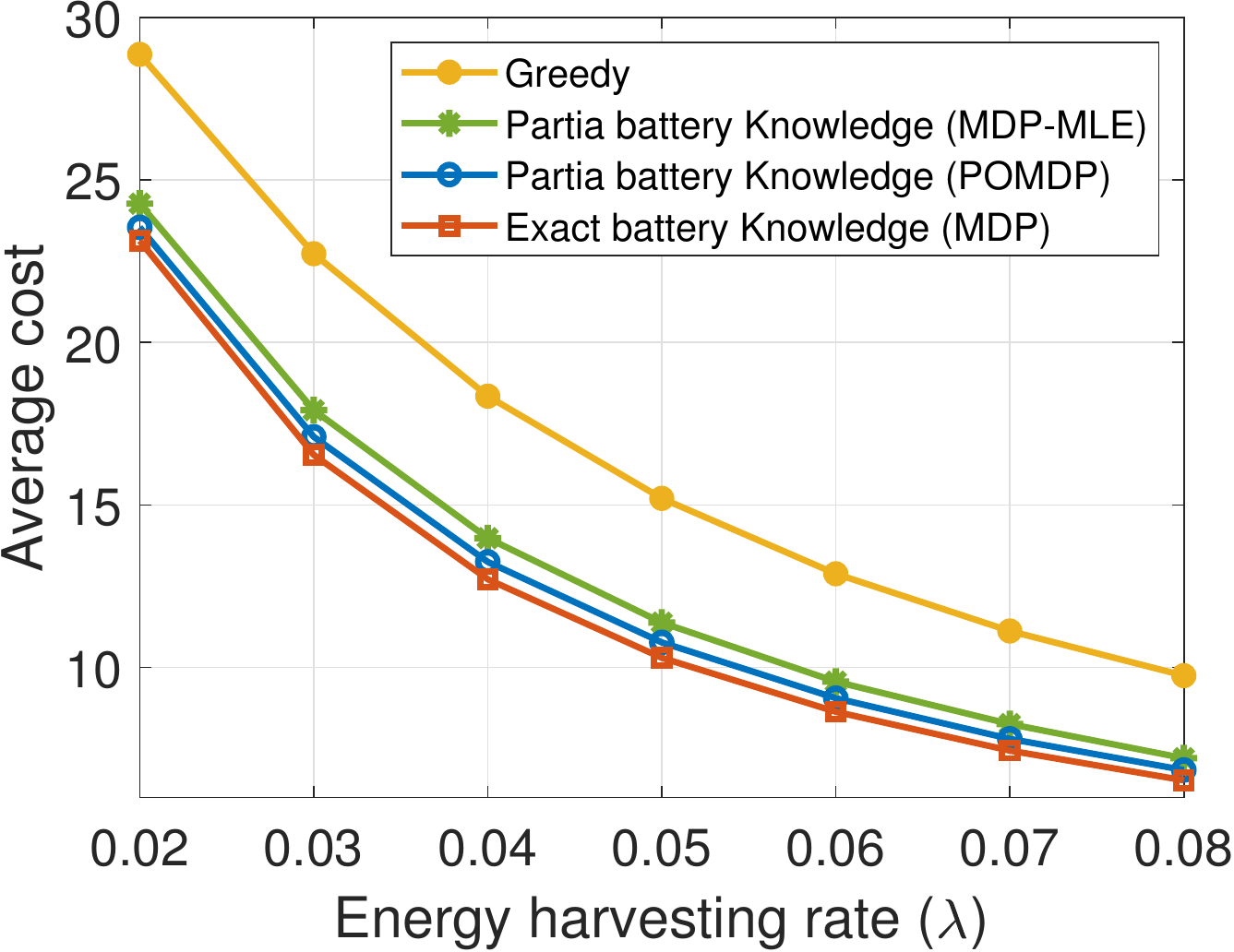}%
		}
		\caption{Average cost with respect to the  energy harvesting rate ($\lambda$) for single sensor scenario.}
		\label{fig_perf_lambdavec}
		\centering
		\subfigure [$\lambda = 0.08$, $B = 2$]{%
			\includegraphics[width=\figPerfwidth \columnwidth]{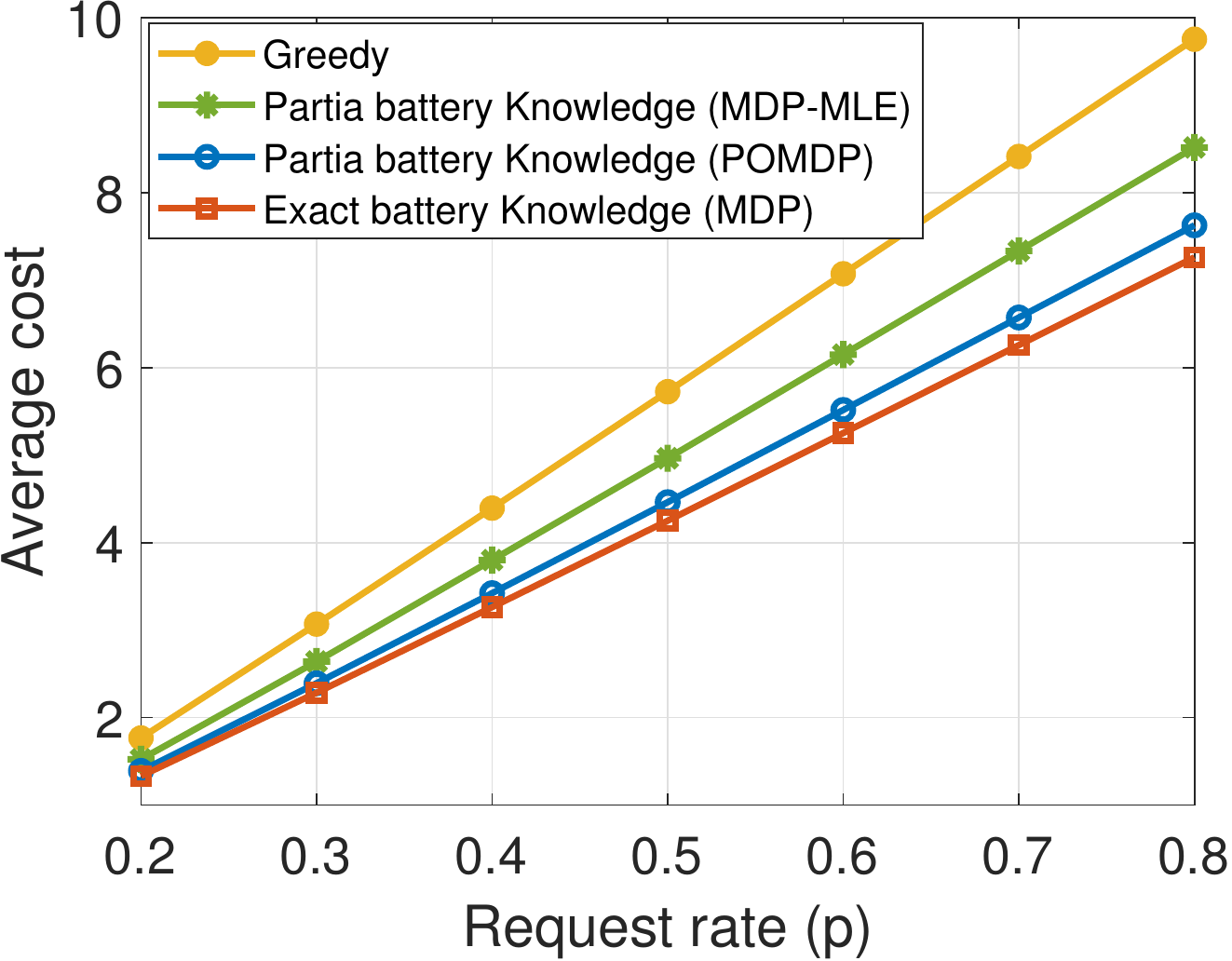}%
		}
		\subfigure [$\lambda = 0.08$, $B = 3$]{%
			\includegraphics[width=\figPerfwidth \columnwidth]{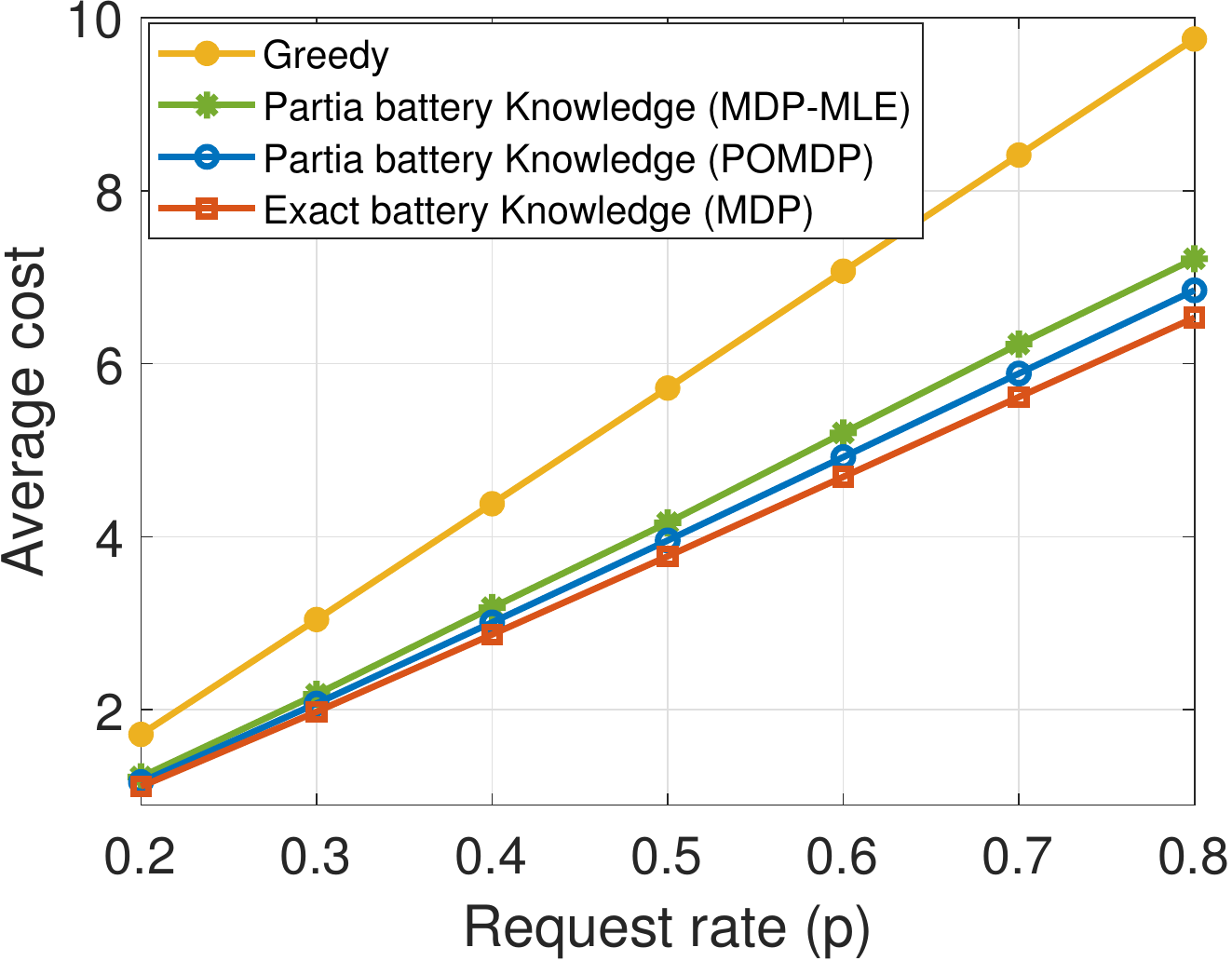}%
		}
		\caption{Average cost with respect to the request rate $p$ for single sensor setup.}
		\label{fig_perf_pvec}
	\end{figure}

	\subsection{Single-sensor IoT Sensing Network}
	We consider a single-sensor scenario with $\lambda = 0.06$, $p = 0.8$, $\Delta^{\mathrm{max}} = 64$, and $B = 2$.
	{Fig.~\ref{fig_structure} illustrates the structure of an optimal policy $\pi^*$, where each point represents a potential  belief-state as a three-tuple $z = (\beta,r,{\Delta})$. For each such $z$, a blue point indicates that the optimal action is to command the sensor (i.e., $\pi^*(z) = 1$), whereas a red point means not to command.}
	{Henceforth, we refer to the set of blue points as the \textit{command region}.}
	We use Table~\ref{tab:belief_set} to represent each belief on the x-axis of these figures; for example, $(0,5)$ is referred to the belief $\boldsymbol{\Lambda}^{5}\beta(1)$ and $(2,3)$ is referred to the belief $\boldsymbol{\Lambda}^{3}{\rho}^{2}$.
	As shown in Fig.~\ref{fig_structure}(a), if there is no request (i.e., $r = 0$), the optimal action is that the edge node does not command the sensor, regardless of the belief and AoI, i.e., $\pi^*(\beta,0,\Delta) = 0$. In this case, the immediate cost (i.e., on-demand AoI \eqref{on-demand-AoI}) becomes zero and the action $a(t) = 0$ leads to energy saving for the sensor, which can be used later to serve the users with fresh measurements.
	Fig.~\ref{fig_structure} illustrates that $\pi^*$ has a \textit{threshold-based} structure with respect to the AoI.
	To exemplify, consider the belief-state $z =((1,7), 1, 22)$
	in which $\pi^*(z) = 1$; then, by the threshold-based structure,  $\pi^*(\underline{z}) = 1$ for all $\underline{z} = ((1,7), 1, \Delta)$, $\Delta \geq 20$.
	From Fig.~\ref{fig_structure}, it can also be inferred that if the optimal action in belief-state $z = (\beta,1,\Delta)$ is $\pi^*(z) = 1$, then the optimal action is $\pi^*(\underline{z}) = 1$ for all states $\underline{z} = ( \boldsymbol{\Lambda}^m\beta, 1, \Delta)$, $m = 1,2,\dots$.
	
	
	Comparing Fig.~\ref{fig_structure}(b) and Fig.~\ref{fig_structure_compare}(a) reveals that the command region enlarges as the EH rate increases. This is because when a sensor harvests energy more frequently, it is able to transmit updates more frequently as well.
	By comparing Fig.~\ref{fig_structure_compare}(b) and Fig.~\ref{fig_structure}(b), it is concluded that the command region shrinks as the request rate $p$ increases, because the edge node commands the sensor less to save its energy for the future requests.

	Fig.~\ref{fig_performance} depicts the performance of the proposed algorithm over time. In the \textit{request-aware greedy} policy, the edge node commands the sensor whenever there is a request (i.e., $r(t) = 1$).
	{As benchmark, we 
		consider a case that the edge node knows the exact battery level at each time slot \cite{hatami2021spawc}. 
		Clearly, this policy
		serves as a lower bound to the proposed POMDP-based algorithm.
		As shown in Fig.~\ref{fig_performance}, for sufficiently large $M$ (e.g., $M \geq 28$ for $\lambda = 0.04$ and $M \geq 16$ for $\lambda = 0.08$), the proposed algorithm obtains optimal performance and reduces the average cost by {approximately
			$25~\%$} compared to the greedy policy.}
	{Besides, the greater the value of $\lambda$, the smaller the value of $M$ for which the proposed algorithm attains optimal performance. This is because the power of the matrix \boldsymbol{$\Lambda$} converges faster.} 
	{Furthermore, the performance of the proposed approach is not too far from the performance under the exact battery knowledge; this relatively small gap shows the impact of the uncertainty about the sensors' battery levels.}
	
	{Fig.~\ref{fig_perf_lambdavec} and Fig.~\ref{fig_perf_pvec} depict the average cost with respect to the energy harvesting rate $\lambda$ and the request rate $p$, respectively.} As expected, the average cost decreases when $\lambda$ increases; the sensor harvests energy more often so that it can send fresh updates more often.
	Further, as the battery capacity $B$ increases, the performance of the MDP-based sub optimal policy (i.e., MLE) becomes closer to the optimal performance, and the gap between the optimal policy for partial battery case and the optimal policy for the exact battery knowledge decreases.
	{Moreover, as shown in Fig.~\ref{fig_perf_pvec}, the average cost increases as $p$ increases. This is because the command region shrinks as the sensor is requested more often.}

	\subsection{Multi-Sensor IoT  Network}\label{sec_simulation_MultiSensor}
	We consider a multi-sensor scenario where ${p_k = 0.8}$, ${\Delta^{\mathrm{max}} = 64}$, and ${B = 3}$.
	{Each sensor is assigned an energy harvesting rate $\lambda_k$ from the set $\{0.01,0.02,\dots, 0.1\}$ sequentially: sensors ${1,11,\ldots}$ have the rate $0.01$, sensors ${2,12,\ldots}$ have the rate $0.02$, and so on.}
	The following benchmarks are used for comparison.
	1) A (request-aware) greedy policy where the edge node commands at most $N$ sensors with the largest AoI from the set $\mathcal{W}(t) \triangleq \{k \mid r_k(t) = 1, k\in\mathcal{K}\}$, i.e., the set of sensors whose status are requested by a user, 2) The lower bound, obtained by following an optimal relaxed policy $\pi^*_{\R}$ (see (12) in \cite{hatami2022POMDP_multisensor}), and 3) the case where the edge node knows the exact battery levels at each slot for which the relax-then truncate approach is used to find an asymptotically optimal policy  \cite{hatami2022JointTcom}.
	
	\newcommand\ffww{.48}
	\begin{figure}[t!]
		\centering
		\subfigure [$\Gamma = 0.02$]{%
			\includegraphics[width=\ffww \columnwidth]{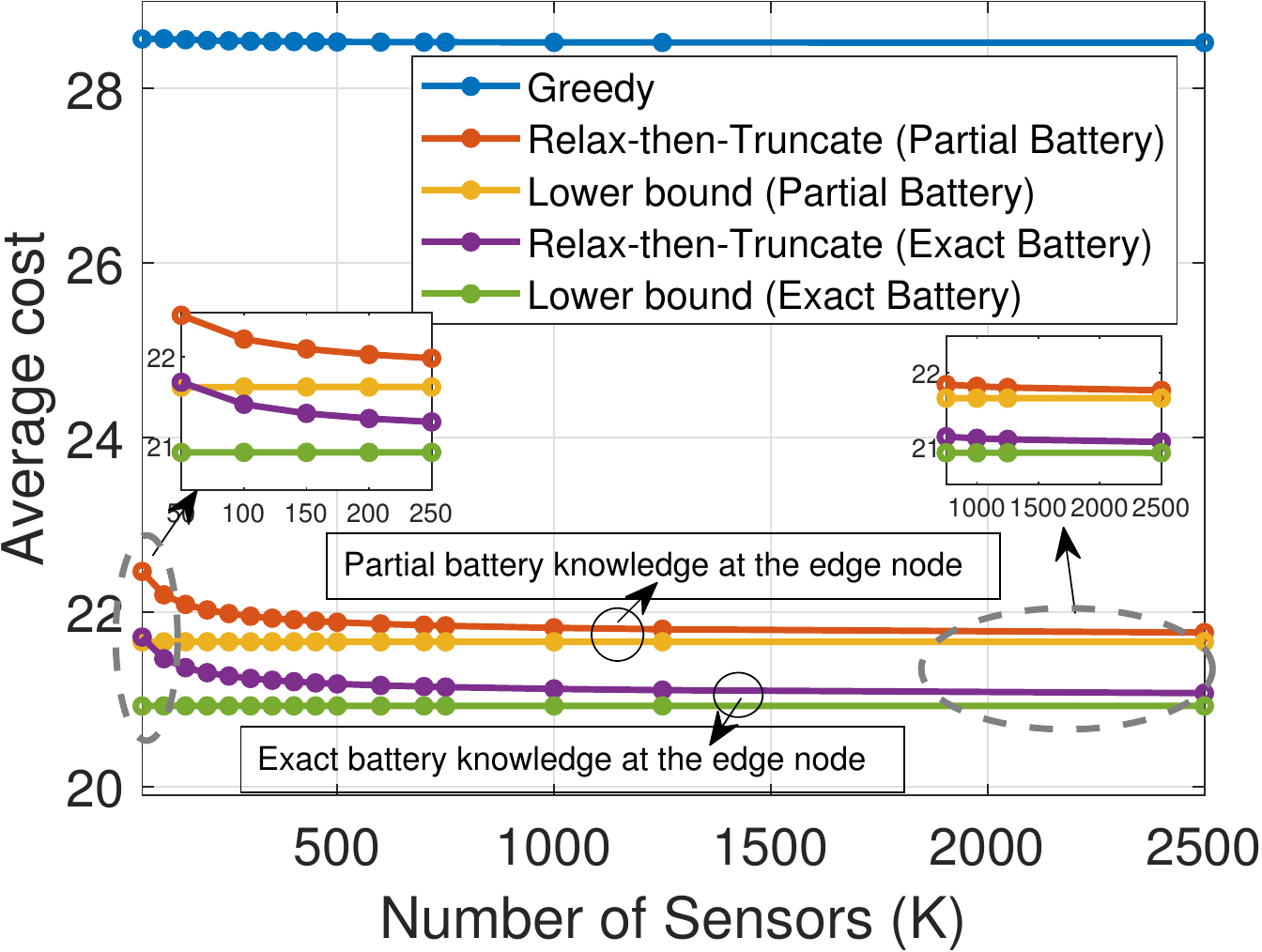}
		}
		\subfigure [$\Gamma = 0.15$]{%
			\includegraphics[width=\ffww \columnwidth]{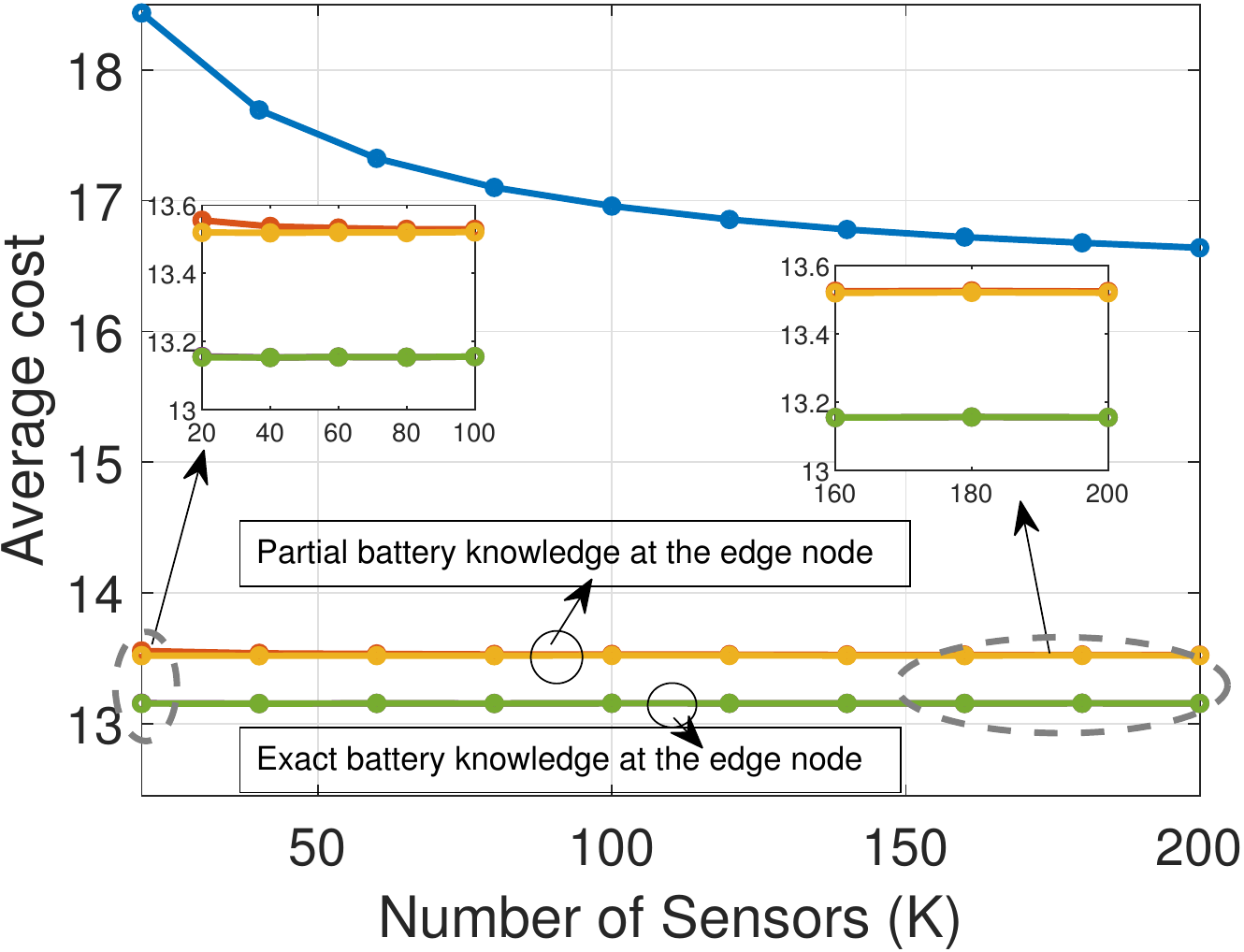}%
		}
		\caption{Performance of the proposed {relax-then-truncate approach} in terms of average cost with respect to the number of sensors $K$ for multi-sensor setup under transmission constraint.
		}
		\label{fig_perf_alpha_fixed}
	\end{figure}

	{The performance of the relax-then-truncate algorithm concerning the number of sensors $K$ for various normalized transmission budget $\Gamma \triangleq \frac{N}{K}$ is shown in Fig.~\ref{fig_perf_alpha_fixed}.}
	The results were acquired by averaging each algorithm over $10$ episodes, each of length $10^7$ slots.
	{First, the proposed algorithm achieves a reduction in the average cost of about $30~\%$ compared to the greedy policy.}
	Due to asymptotic optimality of the proposed algorithm, 
	the gap between the proposed policy and the lower bound is very small for large values of $K$; the same holds true for the exact battery knowledge.
	Interestingly, both relax-then-truncate approaches perform close to the optimal solutions even for moderate numbers of sensors.
	{Moreover, \mbox{$\text{Figs.\ \ref{fig_perf_alpha_fixed}(a) and (b)}$} show that for large $\Gamma$, the proposed policy approaches the optimal performance for smaller values of $K$.}
	{The reason is that, as $\Gamma$ increases, the proportion of sensors that can be commanded at each slot increases, and  thus, the proportion of truncated sensors (i.e., those not commanded under $\tilde{\pi}$ compared to $\pi^*_{\R}$) decreases.}
	
	\begin{figure}[t]
		\centering
		\subfigure[]{
			\includegraphics[width=\ffww \columnwidth]{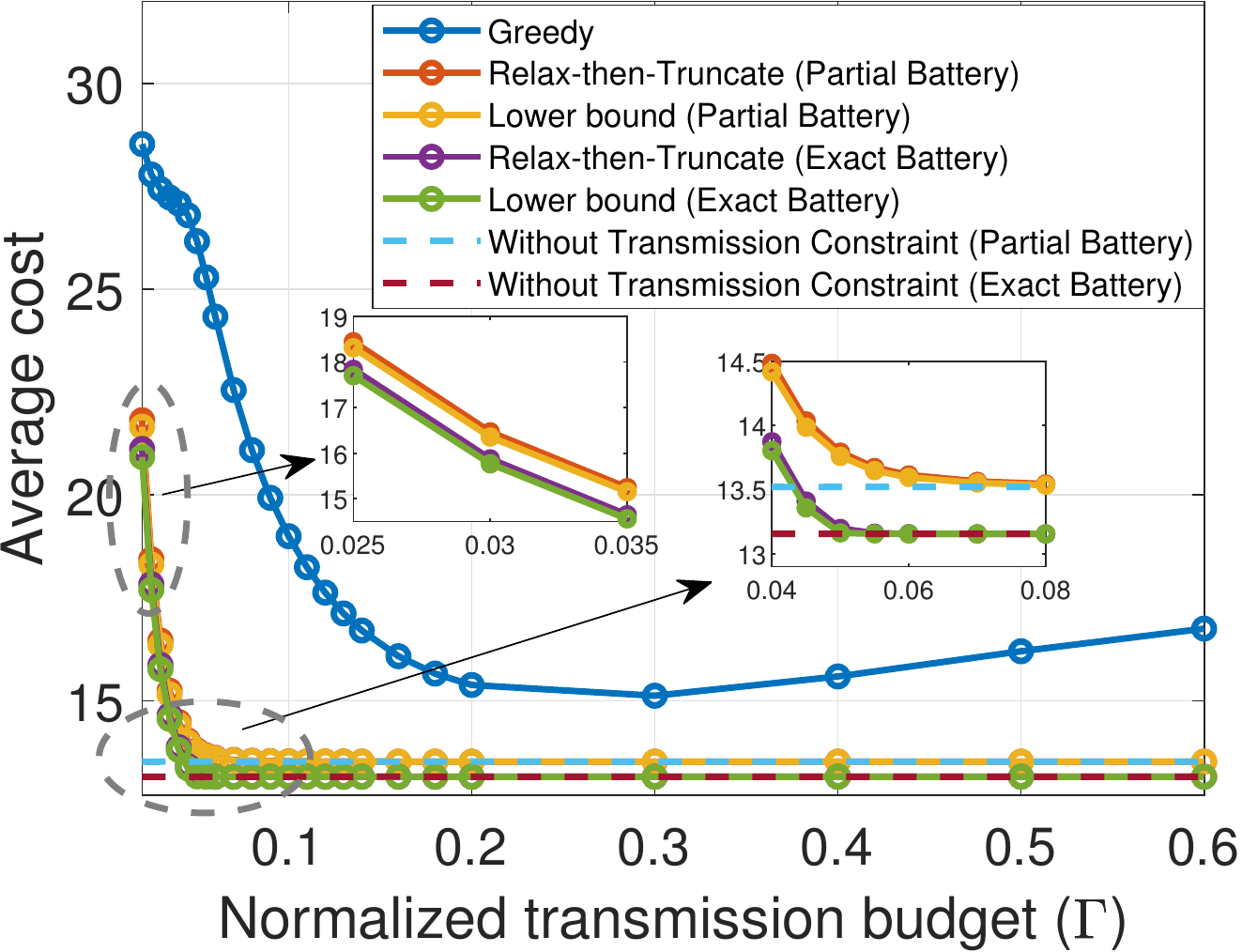}
		}
		\subfigure[]{
			\includegraphics[width=\ffww \columnwidth, trim={0 0 0 0},clip]{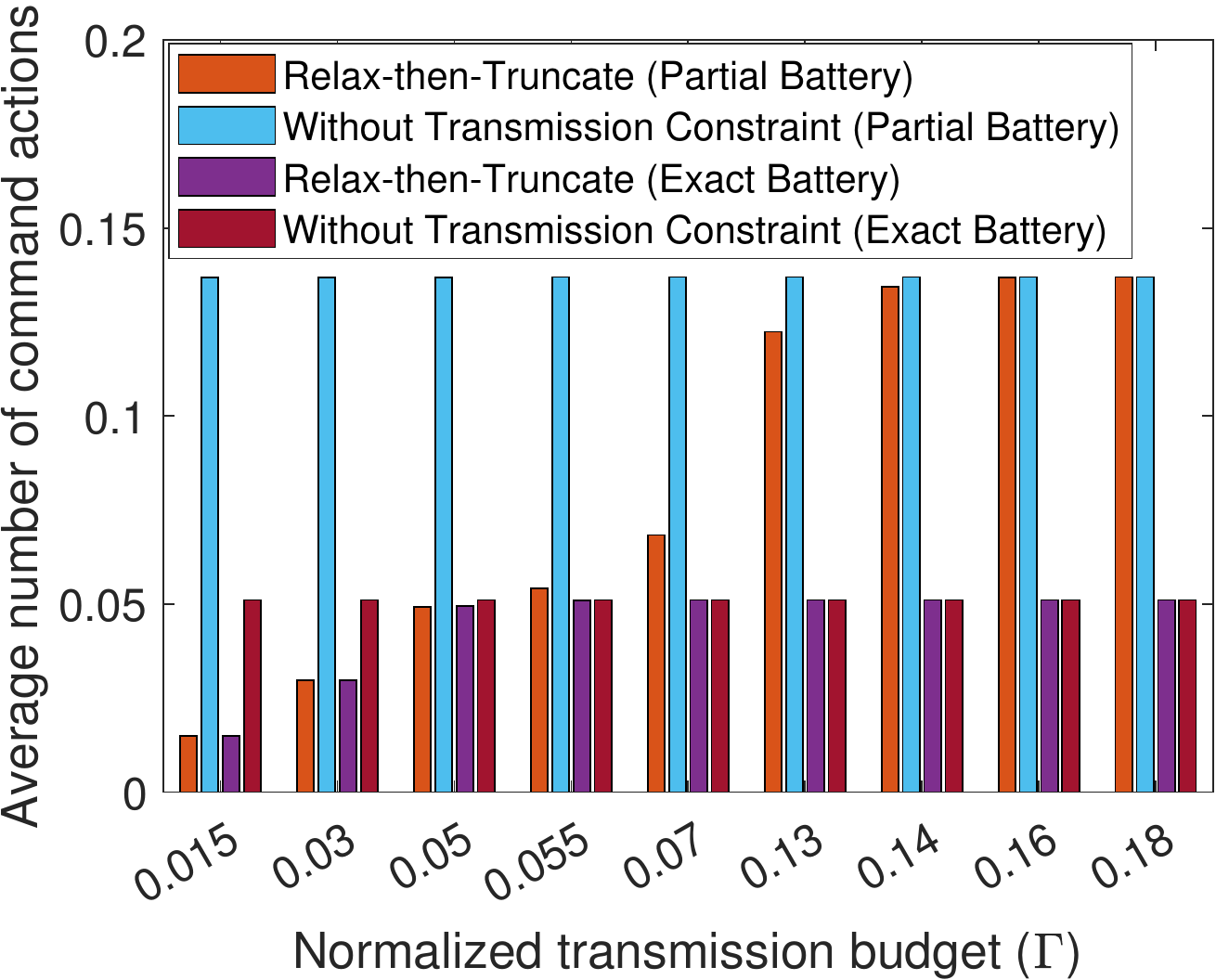}
		}
		\caption{(a) Average cost and (b) Average number of command actions with respect to $\Gamma$ when ${K = 1000}$ for multi-sensor setup under transmission constraint.}
		\label{fig_perf_K_fixed}
	\end{figure}

	
	{The average cost and average number of command actions with respect to the normalized transmission budget $\Gamma$ are illustrated in $\text{Fig.\ \ref{fig_perf_K_fixed}(a)}$ and $\text{Fig.\ \ref{fig_perf_K_fixed}(b)}$, respectively. For the benchmarking, we plot the performance of an optimal policy for the case with no transmission constraint (i.e., ${N \geq K}$) \cite{hatami2021spawc,hatami2022spawc_partialbattery}. As illustrated in $\text{Fig.\ \ref{fig_perf_K_fixed}(a)}$, the average cost for the proposed algorithm decreases as $\Gamma$ increases. This is because, for a fixed $K$, increasing $\Gamma$ increases the transmission budget $N$, allowing the edge node to command more sensors at each slot and serve users with fresh status updates more frequently.}
	{Interestingly, there exists a point beyond which increasing $\Gamma$ does not lead to a decrease in the average cost. This is because, as depicted in $\text{Fig.\ \ref{fig_perf_K_fixed}(b)}$, the average number of command actions ceases to increase (after ${\Gamma \geq 0.055}$ and ${\Gamma \geq 0.16}$ for exact and partial battery knowledge, respectively), meaning that the edge node has more transmission budget than required. In this case, the limited availability of the energy at the EH sensors becomes the primary constraint that limits the transmission of fresh status updates.}
	

	\section{Conclusions}\label{sec_conclusions}
	We investigated status updating under inexact knowledge about the battery levels of the energy harvesting (EH) sensors in an IoT network, where users make on-demand requests to a cache-enabled edge node to send status updates about various random processes monitored by the sensors. Accounting for the partial battery knowledge at the edge node, we derived a POMDP model for the on-demand AoI minimization problem. We converted the POMDP into a belief-state MDP and, via characterizing its key structures, developed an iterative algorithm that obtains an optimal policy for single sensor setup. 
	Additionally, we proposed a sub-optimal MDP-based policy that has less computational complexity than the optimal POMDP-based policy. {We also developed an efficient algorithm implementation leveraging the sparsity of the transition matrices.}
	Furthermore, we extended our approach to the multi-sensor setup under a transmission constraint, where only a limited number of sensors can send status updates at each time slot. In particular, we developed a low-complexity relax-then-truncate algorithm and showed that it is asymptotically optimal as the number of sensors goes to infinity.
	Our numerical experiments showed that an optimal POMDP-based policy has a threshold-based structure,  demonstrated the performance gains obtained by the proposed algorithm compared to a request-aware greedy policy, and depicted that the sub-optimal MDP-based method performs well when the battery capacity of the sensors increases.
	{Furthermore, the performance of the proposed POMDP approach is not too far from the performance under the exact battery knowledge (acts as a lower bound); the relatively small gap shows the impact of the uncertainty about the sensors' battery levels.}
	Finally, our experiments illustrated that the relax-then-truncate method has near-optimal performance even for moderate numbers of sensors in the multi-sensor scenario under the transmission constraint.

	\begin{appendix}
		
		\subsection{Proof of Theorem \ref{lemma_beleief_update}}\label{sec_appendix_lemma_beleief_update}
		We start from the definition of the belief in \eqref{eq_def_update} and express ${\beta_j(t+1)}$ as
		\begin{subequations}\label{eq-proof-belief-update-def}
			\begin{align}
				\beta_j(t+1) & = \Pr\left(b(t+1) = j \mid \phi^\mathrm{c}(t+1)\right) = \Pr\left(b(t+1) = j \mid \phi^\mathrm{c}(t), o(t+1), a(t)\right) 
				\\ & = \frac{\Pr(b(t+1) = j, \phi^\mathrm{c}(t), o(t+1), a(t))}{\Pr(\phi^\mathrm{c}(t), o(t+1), a(t))}  
				\\& = \frac{\Pr(\phi^\mathrm{c}(t),a(t)) \Pr(b(t+1) = j,o(t+1)\mid \phi^\mathrm{c}(t),a(t))}{\Pr(\phi^\mathrm{c}(t),a(t)) \Pr(o(t+1)\mid \phi^\mathrm{c}(t),a(t))} 
				\\& = \frac{\Pr(b(t+1) = j,o(t+1)\mid \phi^\mathrm{c}(t),a(t))}{\Pr(o(t+1)\mid \phi^\mathrm{c}(t),a(t))}
				\\& \overset{(a)}{=} \frac{\Pr(b(t+1) = j,o(t+1)\mid \phi^\mathrm{c}(t),a(t))}{\zeta} 
				\\&= \frac{1}{\zeta}\sum_{i = 0}^{B}\Pr(b(t) = i,b(t+1) = j,o(t+1)\mid \phi^\mathrm{c}(t),a(t))
				\\& = \frac{1}{\zeta}\sum_{i = 0}^{B} \Bigg [\Pr(b(t) = i \mid \phi^\mathrm{c}(t),a(t))\Pr(b(t+1) = j \mid b(t) = i, \phi^\mathrm{c}(t),a(t))\notag\\
				& \hspace{2.5cm}\Pr(o(t+1)\mid b(t+1) = j, b(t) = i, \phi^\mathrm{c}(t),a(t))\Bigg]\\
				& \overset{(b)}{=} \frac{1}{\zeta}\sum_{i = 0}^{B} \Bigg[ \beta_i(t) \Pr(b(t+1)= j \mid b(t) = i,a(t))\notag\\
				& \hspace{2.5cm} \Pr(o(t+1)\mid b(t+1) = j, b(t) = i, \phi^\mathrm{c}(t),a(t))\Bigg]
			\end{align}
		\end{subequations}
		where $(a)$ follows by introducing a normalization factor ${\zeta \triangleq \Pr(o(t+1)\mid \phi^\mathrm{c}(t),a(t))}$, which is calculated using 
		${\sum_j \beta_j(t+1) = 1}$, and $(b)$ follows from i) $\Pr(b(t) = i \mid \phi^\mathrm{c}(t),a(t))=\Pr(b(t) = i \mid \phi^\mathrm{c}(t))$ because $b(t)$ is given when performing action $a(t)$, and subsequently using the belief definition $\beta_i(t)$ in \eqref{eq_def_update}, ii) $\Pr(b(t+1) = j \mid b(t) = i,\phi^\mathrm{c}(t),a(t)) = \Pr(b(t+1) = j \mid b(t) = i,a(t))$ because $b(t+1)$ is independent of $\phi^\mathrm{c}(t)$ given $b(t)$ and $a(t)$.
		Next, we derive an expression for $\beta_j(t+1)$ in \eqref{eq-proof-belief-update-def} for the different cases regarding action ${a(t) \in \{0,1\}}$.

		\subsubsection{Action $a(t) = 0$}
		The edge node does not receive 
		{an update} 
		and thus the next observation is either $o(t+1) = (1,\min\{\Delta(t)+1,\Delta^{\mathrm{max}}\},\tilde{b}(t))$ or $o(t+1) = (0,\min\{\Delta(t)+1,\Delta^{\mathrm{max}}\},\tilde{b}(t))$, which happens with probability $p$ and $1-p$, respectively. Recall that $p$ is the probability of having a request at each slot (i.e., $\mathrm{Pr}\{r(t) = 1\} = p$, $\forall t$). We next calculate the belief update function for 
		$a(t) = 0$ and $o(t+1) = (1,\min\{\Delta(t)+1,\Delta^{\mathrm{max}}\},\tilde{b}(t))$. By \eqref{eq-proof-belief-update-def}, we have
		\begin{equation}\label{eq-proof-belief-update-def_a0}
			\begin{aligned}
				\beta_j(t+1) & =  \frac{1}{\zeta}\sum_{i = 0}^{B} \Bigg [\beta_i(t) \Pr(b(t+1) = j \mid b(t) = i,a(t) = 0)\\
				& \hspace{3cm}\underbrace{\Pr(o(t+1) \mid b(t+1) = j, b(t) = i, \phi^\mathrm{c}(t),a(t)=0)}_{\overset{(a)}{=} p} \Bigg] \\
				& = \frac{p}{\zeta}\sum_{i = 0}^{B} \beta_i(t) \underbrace{\Pr(b(t+1) = j \mid b(t) = i,a(t) = 0)}_{(\star)},
			\end{aligned}
		\end{equation}
		where $(a)$ follows from 
		\begin{equation}\notag\label{eq-proof-belief-update-prob-o1-a0}
			\begin{aligned}
				&\Pr\Bigg(o(t+1) =  (1,\min\{\Delta(t)+1,\Delta^{\mathrm{max}}\},\tilde{b}(t)) \mid   b(t+1)= j, b(t) = i, \\ 
				&\hspace{3cm} \underbrace{(\phi^\mathrm{c}(t-1),r(t),\Delta(t),\tilde{b}(t),a(t-1))}_{\phi^\mathrm{c}(t)},a(t)=0\Bigg)=
				\\ & \Pr\left(o(t+1) =  (1,\min\{\Delta(t)+1,\Delta^{\mathrm{max}}\},\tilde{b}(t)) \mid   \Delta(t), \tilde{b}(t),a(t)=0\right){=}\\
				&{\Pr\left(r(t+1) = 1,\Delta(t+1) = \min\{\Delta(t)+1,\Delta^{\mathrm{max}}\}, \tilde{b}(t+1) = \tilde{b}(t) \mid \Delta(t), \tilde{b}(t),a(t)=0\right)} \overset{(b)}{=} \\
				& \underbrace{\Pr(r(t+1) = 1)}_{=p} \underbrace{\Pr(\Delta(t+1) = \min\{\Delta(t)+1,\Delta^{\mathrm{max}}\},\tilde{b}(t+1) = \tilde{b}(t)  \mid \Delta(t),\tilde{b}(t), a(t)=0)}_{=1} 
				\\&\hspace{.5cm}= p,
			\end{aligned}
		\end{equation}
		where $(b)$ follows from the independence of the request process from the other variables.
		At each slot, the sensor harvests one unit of energy with probability $\lambda$. Thus, $(\star)$ in \eqref{eq-proof-belief-update-def_a0} is expressed as
		\begin{equation} \label{eq-proof-belief-update-Pr-harvest}
			\begin{aligned}
				& \Pr(b(t+1) = j \mid b(t) = i<B,a(t) = 0) = \left\{ 
				\begin{array}{ll}
					1-\lambda, & j = i, \\
					\lambda, & j = i+1,\\
					0, & \mbox{otherwise.}
				\end{array}
				\right.\\
				& \Pr(b(t+1) = j \mid b(t) = B,a(t) = 0) = {\mathds{1}_{\{j = B\}}}.
			\end{aligned}
		\end{equation}
		By substituting \eqref{eq-proof-belief-update-Pr-harvest} into \eqref{eq-proof-belief-update-def_a0}, we can express $\beta_j(t+1)$, for each $j \in \{0,1,\ldots,B\}$, as
		\begin{equation}\label{eq-proof-belief-update-def_a0_summarised}
			\begin{array}{ll}
				&\beta_0(t+1) = \frac{p}{\zeta}(1-\lambda) \beta_0(t),\\
				&\beta_1(t+1) = \frac{p}{\zeta}(\lambda \beta_0(t)+(1-\lambda)\beta_1(t)),\\
				& \dots,\\
				& \beta_{B-1}(t+1) = \frac{p}{\zeta}(\lambda \beta_{B-2}(t)+(1-\lambda)\beta_{B-1}(t)),\\
				& \beta_{B}(t+1) = \frac{p}{\zeta}(\lambda \beta_{B-1}(t)+\beta_{B}(t)).
			\end{array}
		\end{equation}
		Using $\sum_{j = 0}^{B}\beta_j(t+1) = 1$, we can easily calculate the normalization factor to be ${\zeta = p}$.
		By rewriting \eqref{eq-proof-belief-update-def_a0_summarised} in the vector form, the updated belief is given by
		$\beta(t+1) = \boldsymbol{\Lambda} \beta(t)$, where the matrix $\boldsymbol{\Lambda}$ is defined in \eqref{eq_matrix_lambda}. 
		For the case where $a(t) = 0$ and ${o(t+1) = (0,\min\{\Delta(t)+1,\Delta^{\mathrm{max}}\},\tilde{b}(t))}$, one can follow the similar steps and conclude that ${\beta(t+1) = \boldsymbol{\Lambda} \beta(t)}$ as well.

		\subsubsection{Action $a(t) = 1$} For the case where ${a(t) = 1}$, the edge node receives 
		{an update}
		whenever $b(t) \geq 1$ and does not receive an update 
		{if}
		$b(t) = 0$. In this regard, if ${b(t)  = m \geq 1}$, the next observation is either $o(t+1) = (1,1,m)$ or $o(t+1) = (0,1,m)$, ${m \in \{1,2,\dots,B\}}$; and, if $b(t) = 0$, the next observation is either $o(t+1) = (1,\min\{\Delta(t)+1,\Delta^{\mathrm{max}}\},\tilde{b}(t))$ or $o(t+1) = (0,\min\{\Delta(t)+1,\Delta^{\mathrm{max}}\},\tilde{b}(t))$.
		We next calculate the belief update function for these cases. {Starting with} 
		$a(t) = 1$ and ${o(t+1) = (1,\min\{\Delta(t)+1,\Delta^{\mathrm{max}}\},\tilde{b}(t)\})}$, by \eqref{eq-proof-belief-update-def}, we have
		\begin{subequations}\label{eq-proof-belief-update-def_a1_o10}
			\begin{align}
				\beta_j(t+1)  & =  \frac{1}{\zeta}\sum_{i = 0}^{B} \Bigg [\beta_i(t) \Pr(b(t+1) = j \mid b(t) = i,a(t) = 1)\notag
				\\& \hspace{2cm}\Pr(o(t+1)\mid b(t+1) = j, b(t) = i, \phi^\mathrm{c}(t),a(t) = 1)\Bigg] 
				\\& = \frac{1}{\zeta} \beta_0(t) \Pr(b(t+1) = j \mid b(t) = 0,a(t) = 1)\notag
				\\& \hspace{2cm}  \underbrace{\Pr(o(t+1)\mid b(t+1) = j, b(t) = 0, \phi^\mathrm{c}(t),a(t) = 1)}_{\overset{(a)}{=} p } \notag
				\\& \hspace{1cm} + \frac{1}{\zeta} \sum_{i = 1}^{B} \Bigg [\beta_i(t) \Pr(b(t+1) = j \mid b(t) = i,a(t) = 1) \notag
				\\&\hspace{2cm}\underbrace{\Pr(o(t+1)\mid b(t+1) = j, b(t) = i, \phi^\mathrm{c}(t),a(t) = 1)}_{\overset{(b)}{=} 0}\Bigg] \\ 
				&  = \frac{p\beta_0(t)}{\zeta}  \Pr(b(t+1) = j \mid b(t) = 0,a(t) = 1) = \left\{ 
				\begin{array}{ll}
					\frac{p\beta_0(t)}{\zeta} (1-\lambda),&  j = 0,\\
					\frac{p\beta_0(t)}{\zeta} \lambda,&  j = 1,\\
					0,& j = 2,\ldots,B,
				\end{array}
				\right.
			\end{align}
		\end{subequations}
		where $(a)$ follows similarly as \eqref{eq-proof-belief-update-prob-o1-a0} and  $(b)$ follows from
		\begin{equation}\label{eq-proof-belief-update-prob-o11-a1}
			\begin{array}{ll}
				& \Pr\left(o(t+1) =  (1,\min\{\Delta(t)+1,\Delta^{\mathrm{max}}\},\tilde{b}(t)) \mid   b(t+1)= j, b(t) = i \geq 1,  \phi^\mathrm{c}(t),a(t)=1\right)
				\\& = \Bigg[\Pr(r(t+1) = 1) 
				\\& \underbrace{\Pr\!\left(\Delta(t+1) \!=\! \min\{\Delta(t)\!+\!1,\Delta^{\mathrm{max}}\},\tilde{b}(t+1) \!= \!\tilde{b}(t)\! \mid\! b(t+1)\!=\! j, b(t)\! =\! i\! \geq\! 1,\! \phi^\mathrm{c}(t) ,a(t)\!=\!1\!\right)}_{\overset{(c)}{=}0}\Bigg] 
				\\& = 0,
			\end{array}\notag
		\end{equation}
		where $(c)$ is because the edge node receives an update if $a(t) = 1$ and $b(t) \geq 1$, and thus, $\Delta(t+1) = 1$ (see \eqref{eq_AoI}).
		Using $\sum_{j = 0}^{B}\beta_j(t+1) = 1$, we have
		${\zeta = p\beta_0(t)}$.
		By rewriting \eqref{eq-proof-belief-update-def_a1_o10} in the vector form, we 
		have
		$\beta(t+1) = {\rho}^0$ (see \eqref{eq_vectors_rho}).
		For 
		$a(t) = 1$ and $o(t+1) = (0,\min\{\Delta(t)+1,\Delta^{\mathrm{max}}\},\tilde{b}(t))$, one can follow the similar steps and conclude that $\beta(t+1) = {\rho}^0$ as well.

		For the cases where $a(t) = 1$ and ${o(t+1) = (1,1,m)}$, $m \in  \{1,2,\dots,B\}$, by \eqref{eq-proof-belief-update-def}, we have
		\begin{equation}\raisetag{-4ex}\label{eq-proof-belief-update-def_a1_m_summarised}
			\begin{aligned}
				\beta_j(t+1)  & =  \frac{1}{\zeta}\sum_{i = 0}^{B}\Bigg[\beta_i(t) \Pr(b(t+1) = j \mid b(t) = i,a(t) = 1) 
				\\& \hspace{3.5cm}\underbrace{\Pr(o(t+1)\mid b(t+1) = j, b(t) = i, \phi^\mathrm{c}(t),a(t) = 1)}_{\overset{(a)}{=} p \mathds{1}_{\{i=m\}}}\Bigg] 
				\\ & = \frac{p\beta_m(t)}{\zeta}  \Pr(b(t+1) = j \mid b(t) = m,a(t) = 1) 
				\\&= \left\{ 
				\begin{array}{ll}
					\frac{p\beta_m(t)}{\zeta} (1-\lambda),&  j = m-1,\\
					\frac{p\beta_m(t)}{\zeta} \lambda,&  j = m,\\
					0,& \mbox{otherwise.}
				\end{array}
				\right.
			\end{aligned}
		\end{equation}
		where $\mathds{1}_{\{\cdot\}}$ is the indicator function and $(a)$ follows from
		\begin{equation}\notag
			\begin{aligned}
				& \Pr\big(o(t+1) =  (1,1,m) \mid   b(t+1)= j, b(t) = i, \phi^\mathrm{c}(t) ,a(t)=1\big)  
				\\& = \Bigg[ \underbrace{\Pr(r(t+1) = 1)}_{= p}
				\\ & \underbrace{\Pr\big(\Delta(t+1) = 1,\tilde{b}(t+1) = m \geq 1  \mid b(t+1)= j, b(t) = i, \phi^\mathrm{c}(t) ,a(t)=1\big)}_{\overset{(b)}{=} \mathds{1}_{\{i=m\}}}\Bigg] \\& = p \mathds{1}_{\{i=m\}},
			\end{aligned}
		\end{equation}
		where $(b)$ follows because the edge node receives an update
		whenever $a(t) = 1$ and $b(t) \geq 1$, and thus, $\tilde{b}(t+1) = b(t)$ (see Section~\ref{sec_system_model_partial_battery}).
		Using $\sum_{j = 0}^{B}\beta_j(t+1) = 1$, the normalization factor is derived as $\zeta = p\beta_m(t)$.
		Therefore, by \eqref{eq-proof-belief-update-def_a1_m_summarised}, we have $\beta(t+1) = {\rho}^m$, where the vectors ${\rho}^m$, $m \in \{1,2,\dots,B\}$, are defined in \eqref{eq_vectors_rho}. For the cases where $a(t) = 1$ and ${o(t+1) = (0,1,m)}$, $m \in  \{1,2,\dots,B\}$, one can follow the similar steps and conclude that $\beta(t+1) = {\rho}^m$.

		\subsection{Proof of Theorem \ref{theorem_bellman_eq}}\label{sec_appendix_theorem_bellman_eq}
		
		By rewriting the Bellman equation for the average cost POMDP 
		\cite[Chapter~7]{krishnamurthy2016partially}, we have
		\begin{equation}\notag
			\bar{C}^* + h(z) = \min_{a \in \mathcal{A}}[c(z,a) + \sum_{o^\prime}\Pr(o^\prime \mid z,a)h(z^\prime)],~ z \in \mathcal{Z},
		\end{equation}
		where $c(z,a)$ is the immediate cost obtained by choosing action $a$ in belief-state $z$, ${z = (\beta,o) = (\beta,r,\Delta,\tilde{b})}$ is the current belief state, ${o^\prime = (r^\prime,\Delta^\prime,\tilde{b}^\prime)}$ is the observation given action $a$, and ${z^\prime = (\tau(\beta,o^\prime,a),r^\prime,\Delta^\prime,\tilde{b}^\prime)}$ is the next belief state given action $a$ and observation $o^\prime$.
		By defining an action-value function as $Q(z,a)\triangleq c(z,a) + \sum_{o^\prime}\Pr(o^\prime \mid z,a)h(z^\prime)$, we have
		\begin{subequations}\label{eq-proof-bellman-Qfunc}
			\begin{align}
				Q(z,a)  &=  c(z,a) +  \sum_{o^\prime}\Pr(o^\prime \mid z,a)h(z^\prime) 
				\\& = \sum_{s} \Pr(\underbrace{s}_{(b,s^\mathrm{v})} \mid z) c(s,a) + \sum_{o^\prime}\sum_{s} \underbrace{\Pr(o^\prime,s \mid z,a)}_{\Pr(s\mid z,a)\Pr(o^\prime \mid s,z,a)} h(z^\prime)  
				\\& = \sum_{b} \sum_{s^\mathrm{v}}\underbrace{\Pr(b,s^\mathrm{v} \mid \beta,o)}_{\mathds{1}_{\{s^\mathrm{v} = o\}}\Pr(b\mid\beta,s^\mathrm{v},o)} c(b,s^\mathrm{v},a)\notag \\&\hspace{5mm}+\sum_{r^\prime}\sum_{\Delta^\prime}\sum_{\tilde{b}^\prime}\sum_{b}\sum_{s^\mathrm{v}} \underbrace{\Pr(b,s^\mathrm{v}\mid \beta,o,a)}_{\mathds{1}_{\{s^\mathrm{v} = o\}}\Pr(b\mid\beta,s^\mathrm{v},o,a)}\Pr(r^\prime,\Delta^\prime,\tilde{b}^\prime \mid b,s^\mathrm{v},\beta,o,a) h(z^\prime)
				\\ & = \sum_{b} \beta_b c(b,o,a) + \sum_{r^\prime}\sum_{\Delta^\prime}\sum_{\tilde{b}^\prime}\sum_{b} \beta_b \Pr(r^\prime,\Delta^\prime,\tilde{b}^\prime\mid b,\underbrace{r,\Delta,\tilde{b}}_{o},a) h(z^\prime)
				\\ & = \sum_{b} \beta_b c(b,o,a) + \sum_{b} \beta_b \sum_{r^\prime}\Pr(r^\prime) \sum_{\Delta^\prime} \sum_{\tilde{b}^\prime} \Pr(\Delta^\prime,\tilde{b}^\prime\mid b,\Delta,\tilde{b},a) h(z^\prime),
			\end{align}
		\end{subequations}
		where $s^\mathrm{v} = (r,\Delta,\tilde{b})$ is the visible part of the state, which is equivalent to the observation $o$ (i.e., $o = s^\mathrm{v}$).
		For 
		the case where
		$a = 0$, by Theorem~\ref{lemma_beleief_update}, we have $z^\prime = (\tau(\beta,o^\prime,0),r^\prime,\Delta^\prime,\tilde{b}^\prime)={ (\boldsymbol{\Lambda}\beta,r^\prime,\min\{\Delta+1,\Delta^\mathrm{max}\},\tilde{b}^\prime)}$, and thus, $Q(z = (\beta,r,\Delta,\tilde{b}),a = 0)$ in \eqref{eq-proof-bellman-Qfunc} is expressed as
		\begin{subequations}\label{eq_bellman_action0}
			\begin{align}
				Q(z,0) & = Q((\beta,r,\Delta,\tilde{b}),0)  
				\\&= \sum_{b} \beta_b \underbrace{c(b,r,\Delta,\tilde{b},a = 0)}_{= r \min\{\Delta+1,\Delta^\mathrm{max}\}} 
				+ \sum_{b} \beta_b \sum_{r^\prime}\Pr(r^\prime) \sum_{\Delta^\prime} \sum_{\tilde{b}^\prime} \underbrace{\Pr(\Delta^\prime,\tilde{b}^\prime\mid b,\Delta,\tilde{b},a = 0)}_{=\mathds{1}_{\{\Delta^\prime = \min\{\Delta+1,\Delta^\mathrm{max}\},\tilde{b}^\prime = \tilde{b}\}}} h(z^\prime)\\ 
				&  =r \min\{\Delta+1,\Delta^\mathrm{max}\} \underbrace{\sum_{b} \beta_b}_{=1} \notag
				\\&\hspace{1cm} + \sum_{r^\prime}\underbrace{[r^\prime p+ (1-r^\prime)(1-p)]}_{\Pr(r^\prime)} h(\boldsymbol{\Lambda}\beta,r^\prime,\min\{\Delta+1,\Delta^\mathrm{max}\},\tilde{b}) \underbrace{\sum_{b} \beta_b}_{=1}\\ 
				&  =   r \min\{\Delta+1,\Delta^\mathrm{max}\} + \sum_{r^\prime}[r^\prime p+ (1-r^\prime)(1-p)] h(\boldsymbol{\Lambda}\beta,r^\prime,\min\{\Delta+1,\Delta^\mathrm{max}\},\tilde{b}).
			\end{align}
		\end{subequations}
		For the case where $a = 1$, $Q(z = (\beta,r,\Delta,\tilde{b}),a = 1)$ in \eqref{eq-proof-bellman-Qfunc} is expressed as
		\begin{subequations}\label{eq_bellman_action1}
			\begin{align}
				Q(z,1) & = Q((\beta,r,\Delta,\tilde{b}),1)
				\\ & = \sum_{b = 0}^{B} \beta_b c(b,r,\Delta,\tilde{b},a = 1) + \sum_{b = 0}^{B} \beta_b \sum_{r^\prime}\Pr(r^\prime) \sum_{\Delta^\prime} \sum_{\tilde{b}^\prime} \Pr(\Delta^\prime,\tilde{b}^\prime\mid b,\Delta,\tilde{b},a = 1) h(z^\prime) \\
				& = \beta_0\underbrace{c(b = 0,r,\Delta,\tilde{b},a)}_{=r \min\{\Delta+1,\Delta^\mathrm{max}\}} +  \sum_{b = 1}^{B} \beta_b \underbrace{c(b,r,\Delta,\tilde{b},a)}_{= r \times1} \notag
				\\& \hspace{1cm}+\beta_0 \sum_{r^\prime}\Pr(r^\prime) \sum_{\Delta^\prime} \sum_{\tilde{b}^\prime} \underbrace{\Pr(\Delta^\prime,\tilde{b}^\prime\mid b= 0,\Delta,\tilde{b},a)}_{ = \mathds{1}_{\{\Delta^\prime = \min\{\Delta+1,\Delta^\mathrm{max}\},\tilde{b}^\prime = \tilde{b}\}}} h(z^\prime) \notag
				\\ & \hspace{1cm} + \sum_{b = 1}^{B} \beta_b \sum_{r^\prime}\Pr(r^\prime) \sum_{\Delta^\prime} \sum_{\tilde{b}^\prime} \underbrace{\Pr(\Delta^\prime,\tilde{b}^\prime\mid b,\Delta,\tilde{b},a=1)}_{= \mathds{1}_{\{\Delta^\prime = 1 ,\tilde{b}^\prime = b\}}} h(z^\prime) \\
				& = r \beta_0 \min\{\Delta+1,\Delta^\mathrm{max}\} + r \underbrace{\sum_{b = 1}^{B} \beta_b}_{ = 1-\beta_0}  + \Bigg[\beta_0 \sum_{r\prime} [r^\prime p+ (1-r^\prime)(1-p)]\notag \\&\hspace{2cm}h(\underbrace{\tau(\beta,r^\prime,\min\{\Delta+1,\Delta^\mathrm{max}\},\tilde{b},a = 1)}_{\overset{(a)}{=}{\rho}^0},r^\prime,\min\{\Delta+1,\Delta^\mathrm{max}\},\tilde{b})\Bigg] \notag
				\\ & \hspace{1cm}+ \sum_{b = 1}^{B} \beta_b \sum_{r^\prime}[r^\prime p+ (1-r^\prime)(1-p)] h(\underbrace{\tau(\beta,r^\prime,1,b,a)}_{\overset{(a)}{=}{\rho}^b},r^\prime,1,b)
				\\& = r \beta_0 \min\{\Delta+1,\Delta^\mathrm{max}\} + r (1-\beta_0) \notag
				\\& \hspace{10mm}+ \beta_0  \sum_{r^\prime  = 0}^{1} [r^\prime p+ (1-r^\prime)(1-p)] h({\rho}^0,r^\prime,\min\{\Delta+1,\Delta^\mathrm{max}\},\tilde{b})\notag\\
				& \hspace{10mm} + \sum_{b = 1}^{B} \beta_b \sum_{r^\prime = 0}^{1}[r^\prime p+ (1-r^\prime)(1-p)] h({\rho}^b,r^\prime,1,b), 
			\end{align}
		\end{subequations}
		where $(a)$ follows from Theorem~\ref{lemma_beleief_update}.

		\subsection{Proof of Lemma \ref{lemm_struct_LAMBDA}}\label{sec_appendix_lemm_struct_LAMBDA}
		We prove this lemma by mathematical induction. For $m = 1$, we have $\boldsymbol{\Lambda}$ as shown in \eqref{eq_matrix_lambda},
		and hence, the lemma holds for $m = 1$. Assume that the lemma holds for some $m$.
		We prove that the lemma also holds for $m+1$. We have 
		\begin{small}
			\begin{equation}\label{eq_matrix_lambda_power_m+1} \notag
				\begin{array}{ll}
					& \boldsymbol{\Lambda}^{m+1} = \boldsymbol{\Lambda}^m \boldsymbol{\Lambda} \\
					= &
					\begin{pmatrix}
						(1-\lambda)^m & 0 & \cdots & 0 & 0 \\
						m\lambda(1-\lambda)^{m-1} & (1-\lambda)^m & \cdots & 0 & 0\\
						\frac{m(m-1)}{2!}\lambda^2(1-\lambda)^{m-2} & m\lambda(1-\lambda)^{m-1} & \cdots & 0 & 0\\
						\frac{m(m-1)(m-2)}{3!}\lambda^3(1-\lambda)^{m-3} & \frac{m(m-1)}{2}\lambda^2(1-\lambda)^{m-2} & \cdots & 0 & 0\\
						\vdots  & \vdots & \ddots & \vdots & \vdots  \\
						\lambda^{B-1}(1-\lambda)^{m-B+1}\prod_{\nu = 0}^{B-2} \frac{(m-\nu)}{\nu+1}& \lambda^{B-2}(1-\lambda)^{m-B+2}\prod_{\nu = 0}^{B-3} \frac{(m-\nu)}{\nu+1}&  \cdots   & (1-\lambda)^m& 0\\
						1- \sum_{j^\prime = 1}^{B}\Lambda_{j^\prime,1} & 1- \sum_{j^\prime = 1}^{B}\Lambda_{j^\prime,2} & \cdots & 1-(1-\lambda)^m & 1
					\end{pmatrix} \times \\
					&\begin{pmatrix}
						1-\lambda & 0 & \cdots & 0 & 0 \\
						\lambda & 1-\lambda & \cdots & 0 & 0\\
						\vdots  & \vdots  & \ddots & \vdots & \vdots  \\
						0 & 0 & \cdots & 1-\lambda & 0\\
						0 & 0 & \cdots & \lambda & 1
					\end{pmatrix}  \\
					=&
					\begin{pmatrix}
						(1-\lambda)^{m+1} & 0 & \cdots & 0 & 0 \\
						(m+1)\lambda(1-\lambda)^{m} & (1-\lambda)^{m+1} & \cdots & 0 & 0\\
						\frac{(m+1)m}{2!}\lambda^2(1-\lambda)^{m-1} & (m+1)\lambda(1-\lambda)^{m} & \cdots & 0 & 0\\
						\frac{(m+1)m(m-1)}{3!}\lambda^3(1-\lambda)^{m-2} & \frac{(m+1)m}{2}\lambda^2(1-\lambda)^{m-1} & \cdots & 0 & 0\\
						\vdots  & \vdots & \ddots & \vdots & \vdots  \\
						\lambda^{B-1}(1-\lambda)^{m-B+2}\prod_{\nu = 0}^{B-2} \frac{(m+1-\nu)}{\nu+1}& \lambda^{B-2}(1-\lambda)^{m-B+3}\prod_{\nu = 0}^{B-3} \frac{(m-\nu+1)}{\nu+1}&  \cdots   & (1-\lambda)^{m+1}& 0\\
						1- \sum_{j^\prime = 1}^{B}\Lambda_{j^\prime,1} & 1- \sum_{j^\prime = 1}^{B}\Lambda_{j^\prime,2} & \cdots & 1-(1-\lambda)^{m+1} & 1
					\end{pmatrix}.
				\end{array}
			\end{equation}
		\end{small}

		\subsection{Proof of Theorem \ref{theorem_structure_v_part1}}\label{sec_appendix_theorem_structure_v_part1}
		We consider two belief-states ${z = (\beta,r,\Delta,\tilde{b})}$ and ${\zbar = (\beta,r,\Delta,\bbar)}$, where ${\tilde{b} \neq \bbar}$, and prove that $V(z) = V(\zbar)$. 
		{As the sequence $\{V^{(i)}(z)\}_{{i=1,2,\ldots}}$ converges to $V(z)$ for any initialization, {it suffices} to prove that $V^{(i)}(\zbar) = V^{(i)}(z)$, $\forall{i}$, which is established using mathematical induction.}
		The  initial values are selected arbitrarily, e.g., ${V^{(0)}(z) = 0}$ and ${V^{(0)}(\zbar) = 0}$, and thus, $V^{(i)}(z) = V^{(i)}(\zbar)$ holds for ${i = 0}$.
		{Now, suppose that $V^{(i)}(z) = V^{(i)}(\zbar)$ for some $i$. Our goal is to prove that ${V^{(i+1)}(z) = V^{(i+1)}(\zbar)}$ as well.}
		Since, $V^{(i+1)}(z) = \min_{a} Q^{(i+1)}(z,a)$, $\forall z$, we prove that $Q^{(i+1)}(z,a) = Q^{(i+1)}(\zbar,a)$, for all $a \in \{0,1\}$, which concludes the proof. 
		{We provide the proof for $a = 0$; the proof follows similarly for $a = 1$.}
		For $a = 0$, we have 
		\begin{equation}\label{eq_proof_Q0}
			\begin{aligned}
				& Q^{(i+1)}(z,0) -  Q^{(i+1)}(\zbar,0)  = \sum_{r^\prime = 0}^{1} \Bigg[[r^\prime p + (1-r^\prime)(1-p)] 
				\\ &\hspace{8mm}\underbrace{[V^{(i)}(\boldsymbol{\Lambda} \beta,r^\prime,\min\{\Delta+1,\Delta^{\mathrm{max}}\}, \tilde{b}) - V^{(i)}(\boldsymbol{\Lambda} \beta,r^\prime,\min\{\Delta+1,\Delta^{\mathrm{max}}\}, \bbar)]}_{\overset{(a)}{=} 0 }\Bigg] = 0,
			\end{aligned}
		\end{equation}
		where step $(a)$ follows from the induction assumption.
		
		\subsection{Proof of Proposition \ref{theorem-sparsity-transition-matrix}}\label{sec-apndix-theorem-sparsity-transition-matrix}
		For any belief-state $z = (\beta,r,\Delta)$, action $a$,  and next belief state $z^\prime = (\beta^\prime,r^\prime,\Delta^\prime)$, we have
		\begin{equation}\label{eq-sparse-belief-matrix-entries}
			\begin{aligned} 
				&\Pr(z' = (\beta^\prime,r^\prime,\Delta^\prime)\mid z = (\beta,r,\Delta) , a) = \sum_{i = 0}^{B} \Bigg[\Pr(\beta^\prime,r^\prime,\Delta^\prime, b = i\mid \beta,r,\Delta , a)  = \Pr(r^\prime) 
				\\&\hspace{10mm}\sum_{i = 0}^{B}\underbrace{\Pr( b = i\mid \beta,r,\Delta , a,r^\prime)}_{ = \beta_{i}}\Pr(\Delta^\prime\mid \beta,r,\Delta,b = i , a,r^\prime) \Pr(\beta^\prime\mid \beta,r,\Delta , a, r^\prime,\Delta^\prime, b = i)\Bigg]. 
			\end{aligned}
		\end{equation}
		For the case where $a = 0$, we have 
		\begin{equation}\label{eq-sparse-belief-matrix-entries-a0}
			\begin{aligned}
				\Pr(z'\mid z , a ) &  =  \Pr(r^\prime) \sum_{i = 0}^{B} \beta_{i} \underbrace{\Pr(\Delta^\prime\mid \beta,r,\Delta,b = i , a = 0)}_{\overset{(a)}{=}\mathds{1}_{\{\Delta^\prime = \min\{\Delta+1, \Delta^\mathrm{max}\}\}}} \underbrace{\Pr(\beta^\prime\mid \beta,r,\Delta , a = 0, r^\prime,\Delta^\prime, b = i)}_{\overset{(b)}{=}\mathds{1}_{\{\beta^\prime = \Lambda\beta\}}}\\
				& = \left \{ 
				\begin{array}{ll}
					(1-p), &\beta^\prime = \Lambda\beta, r^\prime = 0, \Delta^\prime = \min\{\Delta+1, \Delta^\mathrm{max}\},\\
					p, &\beta^\prime = \Lambda\beta, r^\prime = 1, \Delta^\prime = \min\{\Delta+1, \Delta^\mathrm{max}\},
				\end{array}
				\right.
			\end{aligned}
		\end{equation}
		where $(a)$ is because the AoI increases by one given $a = 0$, and $(b)$ follows from Theorem~\ref{lemma_beleief_update}. {Therefore, $\mathbf{P}^{0}$ has only two non-zero elements in each row, and consequently, ${\mathrm{nz}(\mathbf{P}^{0}) = 2|\mathcal{Z}| = 4M(B+1)\Delta^{\mathrm{max}}}$.}
		For the case where $a = 1$, we have
		\begin{equation}\label{eq-sparse-belief-matrix-entries-a1}
			\begin{aligned}
				\Pr(z'\mid z , a)  & =  \Pr(r^\prime) \Bigg[\beta_{0} \underbrace{\Pr(\Delta^\prime\mid \beta,r,\Delta,b = 0 , a = 1)}_{\overset{(a)}{=}\mathds{1}_{\{\Delta^\prime = \min\{\Delta+1, \Delta^\mathrm{max}\}\}}} \underbrace{\Pr(\beta^\prime\mid \beta,r,\Delta , a = 0, r^\prime,\Delta^\prime, b = 0)}_{\overset{(c)}{=}\mathds{1}_{\{\beta^\prime = \rho^{0}\}}} \\
				& \hspace{1cm}+ \sum_{i =1}^{B} \beta_{i} \underbrace{\Pr(\Delta^\prime\mid \beta,r,\Delta,b = i , a = 1)}_{\overset{(b)}{=}\mathds{1}_{\{\Delta^\prime = 1\}}} \underbrace{\Pr(\beta^\prime\mid \beta,r,\Delta , a = 0, r^\prime,\Delta^\prime, b = i)}_{\overset{(c)}{=}\mathds{1}_{\{\beta^\prime = \rho^{i}\}}} \Bigg]\\
				&  = \left\{
				\begin{array}{ll}
					(1-p)\beta_{0}; &\beta^\prime = \rho^{0}, r^\prime = 0, \Delta^\prime = \min\{\Delta+1, \Delta^\mathrm{max}\}\\
					p\beta_{0}; &\beta^\prime = \rho^{0}, r^\prime = 1, \Delta^\prime = \min\{\Delta+1, \Delta^\mathrm{max}\}\\
					(1-p)\beta_{0}; &\beta^\prime = \rho^{1}, r^\prime = 0, \Delta^\prime = 1\\
					p\beta_{1}; &\beta^\prime = \rho^{1}, r^\prime = 1, \Delta^\prime = 1\\
					\vdots &\vdots \\
					(1-p)\beta_{B}; &\beta^\prime = \rho^{B}, r^\prime = 0, \Delta^\prime = 1\\
					p\beta_{B}; &\beta^\prime = \rho^{B}, r^\prime = 1, \Delta^\prime = 1
				\end{array}
				\right.
			\end{aligned}
		\end{equation}
		where $(a)$ is because the AoI increases by one given $b = 0$, $(b)$ is because the AoI drops to one given $a = 1$ and $b \geq 1$, and $(c)$ follows from Theorem~\ref{lemma_beleief_update}. {Thus, $\mathbf{P}^{1}$ has only $2(B+1)$ non-zero elements in each row, and consequently, $\mathrm{nz}(\mathbf{P}^{0}) = 2(B+1)|\mathcal{Z}| = 4M(B+1)^2\Delta^{\mathrm{max}}$.}
		
	\end{appendix}
	
	
	
	
	\bibliographystyle{IEEEtran}
	\bibliography{Bib/conf_short,Bib/IEEEabrv,Bib/Bibliography}

\end{document}